%% file: main.tex
\documentclass[reqno,11pt]{amsart}
\usepackage{graphicx} % Required for inserting images

\usepackage{hyperref}
\usepackage{graphicx}
\usepackage{slashed}
\usepackage{amscd}
\usepackage{tensor}
\usepackage{amsmath,amssymb,mathabx,bm}
\usepackage{esint}
\usepackage{dsfont}
\usepackage{enumitem}
\usepackage{accents}
\usepackage{color}
\usepackage{accents}
\usepackage[giveninits=true]{biblatex}
\usepackage{tikz}
\usetikzlibrary{decorations.pathmorphing}
\usepackage[off]{auto-pst-pdf} % schnelle Version, Bilder müssen dazu schon vorliegen
\usepackage{pst-grad} % For gradients
\usepackage{pst-plot} % For axes
\usepackage[mathscr]{eucal}
\numberwithin{equation}{section}
\allowdisplaybreaks[1]

\title{On the Conicality of Spacetimes}
\author[E. Ling]{Eric Ling$^{*}$}

    \author[C. F. Paganini]{Claudio F. Paganini$^{**}$}
    \email{eling@fsu.edu \\ claudio.paganini@mathematik.uni-regensburg.de}

    \address{$^{*}$ Department of Mathematics \\ Florida State University \\ Tallahassee \\ Florida}

    \address{$^{**}$ Fakultät für Mathematik \\ Universität Regensburg \\ D-93040 Regensburg \\ Germany}
\newtheorem{Def}{Definition}[section]
\newtheorem{Thm}[Def]{Theorem}

\newtheorem{Prp}[Def]{Proposition}
\newtheorem{Lemma}[Def]{Lemma}
\newtheorem{Remark}[Def]{Remark}
\newtheorem{Corollary}[Def]{Corollary}
\newtheorem{Example}[Def]{Example}

\newtheorem*{theorem*}{Main Theorem}

\newcounter{mnotecount}[subsection]

\newcommand{\g}{\gamma}

\begin{document}
\begin{abstract}

 The notion of conicality, recently introduced in~\cite{grothus2024characterizing}, captures the extent to which the joint future of a finite set in spacetime uniquely determines the generating subset via its light cone structure. In the same paper it was mentioned that conicality holds for Minkowski spacetime of dimension $1+N$ with $N\geq 2$ and it has been conjectured that this property holds more generally.

In this work, we show that neither homotopy with Minkowski space nor global hyperbolicity alone are sufficient for the spacetime to satisfy conicality. We then establish that causally simple, future cohesive spacetimes of dimension $1+N$ with $N\geq 2$ are conical. This class of spacetimes includes, in particular, TIP spacetimes, which can be understood as the timelike past of an observer. This is in line with the origin of causal modeling since the past of an observer is the natural domain for the description of experiments. 
\end{abstract}
\maketitle

\vspace{-0.5cm}

\section{Introduction}

The causal structure of a Lorentzian manifold describes the ordering of events in a spacetime. In recent work~\cite{grothus2024characterizing}, the notion of \emph{conicality} was introduced as an order-theoretic property capturing the extent to which the joint future of a finite set of points determines the generating subset via its light cone structure. This concept arises naturally at the interface between Lorentzian geometry and causal modeling, where one seeks to understand how abstract causal relations can be faithfully embedded into spacetime.

More precisely, conicality expresses a rigidity property of the causal future operation: the intersection of future sets of finitely many events encodes sufficient information to reconstruct the subset of generators. In~\cite{grothus2024characterizing}, it was shown that Minkowski spacetime of dimension $1+N$, with $N \geq 2$, satisfies conicality, while the property fails in $1+1$ dimensions. Based on these observations, it was conjectured that conicality persists for a significantly broader class of spacetimes satisfying suitable causal regularity assumptions.

The motivation for this conjecture is rooted in causal modeling. In frameworks such as those developed in~\cite{spirtes1993causation,pearl2009causality,schoolmaster2020graphical,laubach2021biologists,kleinberg2011review,raita2021bigdata,arti2020trend,allen2017quantum,barrett2019quantum}, causal relations are represented abstractly via directed structures encoding dependencies among variables. Embedding such structures into spacetimes~\cite{vilasini2022general,vilasini2024embedding,grothus2024characterizing} requires compatibility between the combinatorial notion of causality and the geometric one induced by light cones. 

The definition of conicality is provided in Definition \ref{def:conicality}, but we give a brief explanation here. If $\mathcal{T}$ is a poset, we define $J^+(p) = \{q \in \mathcal{T} \mid p \preceq q\}$. Let $L = \{p_1, \dotsc, p_n\} \subset \mathcal{T}$ be a finite nonempty set. The joint future of $L$ is defined as $JF(L) = J^+(p_1) \cap \dotsb \cap J^+(p_n)$. The $\text{span}(L)$ is defined as the union of all nonempty sets $L'$ such that $JF(L') = JF(L)$ and there does not exist a nonempty $L'' \subsetneq L'$ such that $JF(L'') = JF(L)$. Then we say $\mathcal{T}$ is \emph{conical} if for any two finite nonempty subsets $L_i,L_j \subset \mathcal{T}$, we have 
\[
\text{span}(L_i) = \text{span}(L_j) \quad \Longleftrightarrow \quad JF(L_i) = JF(L_j).
\]
The implication ``$\Longrightarrow$" always holds, while the implication ``$\Longleftarrow$" is the challenging part.

Conicality can be interpreted as a condition ensuring that joint future regions behave in a manner consistent with causal inference, preventing degeneracies in which different generating sets produce indistinguishable observable futures.

In this paper, we will prove the following theorem, relaxing the conjectured causality condition from globally hyperbolic in~\cite{grothus2024characterizing} to causally simple. 
\begin{theorem*}
 Let  $(\mathcal{M},g)$ be a causally simple and future cohesive spacetime  of dimension $1+N$ with $N\geq 2$. Then $(\mathcal{M},g)$ is conical according to Definition \ref{def:conicality}.
\end{theorem*}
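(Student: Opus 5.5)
The plan is to show that the joint future determines a \emph{unique} inclusion-minimal generating set. Call a finite nonempty $L$ \emph{irredundant} if no nonempty $L''\subsetneq L$ has $JF(L'')=JF(L)$. By definition, $\text{span}(L)$ is the union of all irredundant $L'$ with $JF(L')=JF(L)$. So it suffices to prove the following.

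\textbf{Key claim.} If $L,L'$ are irredundant and $JF(L)=JF(L')$, then $L=L'$.

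Granting this, the set of irredundant generators of a given joint future $F$ is a single set $M(F)$. Hence $\text{span}(L)=M(JF(L))$, and $JF(L_i)=JF(L_j)$ gives $\text{span}(L_i)=\text{span}(L_j)$. (Every finite $L$ contains an irredundant subset with the same joint future, so $M(F)$ is nonempty.) The easy direction ``$\Longrightarrow$'' is already noted in the paper.

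\textbf{Step 1: each generator leaves a light-cone patch on $\partial JF(L)$.} Fix an irredundant $L$ and $p\in L$. Irredundancy gives a point $x\in JF(L\setminus\{p\})\setminus J^+(p)$; if $L=\{p\}$, this step is trivial. By causal simplicity $J^+(p)$ is closed. Move from $x$ along a future timelike curve that eventually enters $I^+(p)$; such a curve exists, for instance one heading to a point in $I^+(x)\cap I^+(p)$, which is where I expect to use future cohesiveness. The curve stays in the future set $JF(L\setminus\{p\})$. Its first point $y$ in $J^+(p)$ lies on $\partial J^+(p)$. Moreover $y\in I^+(JF(L\setminus\{p\}))$, which is open. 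Hence there is a neighbourhood $U$ of $y$ with
\[
\partial JF(L)\cap U=\partial J^+(p)\cap U .
\]
In a causally simple spacetime $\partial J^+(p)=J^+(p)\setminus I^+(p)$. Every point of it is reached from $p$ by a null geodesic that stays in the boundary. Therefore the achronal Lipschitz hypersurface $\partial J^+(p)\cap U$ is ruled by null generators whose past extensions all emanate from $p$.

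\textbf{Step 2: transfer the patch to $L'$.} Suppose $JF(L')=JF(L)$ with $L'$ irredundant. Since $JF(L')=\bigcap_{q\in L'}J^+(q)$, each point of $\partial JF(L')$ lies on $\partial J^+(q)$ for some $q\in L'$. Now $U\cap\partial JF(L)$ is covered by the finitely many closed sets $\partial J^+(q)\cap U$. By a Baire/pigeonhole argument, some $q\in L'$ satisfies: $\partial J^+(q)$ and $\partial J^+(p)$ agree on a relatively open patch $V\subset U$.

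\textbf{Step 3 (main obstacle): equal cone patches force equal vertices.} Here $N\ge 2$ enters. The patch $V$ is a null hypersurface of dimension $N\ge 2$. Its null generators through different points of $V$ are not all parallel: they fill an $(N-1)\ge 1$-parameter family of directions, just as a Minkowski cone is. Each generator is unique wherever the boundary is differentiable, which holds almost everywhere. Following two distinct generators backward:
\begin{itemize}
\item as generators of $\partial J^+(p)$ they meet at $p$;
\item as generators of $\partial J^+(q)$ they meet at $q$.
\end{itemize}
If $p\neq q$, then say $q$ lies strictly on the past extension of generators from $p$ or vice versa, both lying on $\partial J^+$. Then the generators from the later vertex would have to be extensions past the earlier vertex while remaining achronal, and this is impossible for enough non-collinear generators. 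I expect this step to require the most care: controlling cut points and conjugate points, and using closedness of $J^+$ and future cohesiveness to exclude the degenerate situations. In $1+1$ dimensions the argument must fail, because a boundary patch there is a single null line, and both $p$ and $q$ can slide along it.

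\textbf{Step 4: conclude.} Steps 2 and 3 give $p=q\in L'$, so $L\subset L'$. By symmetry $L'\subset L$, which proves the key claim and hence the Main Theorem.
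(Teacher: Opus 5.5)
Your architecture matches the paper's: isolate a patch of $\partial JF(L)$ on which only $\partial J^+(p)$ contributes, transfer it to $L'$ by a finite-union-of-closed-sets argument, then apply a rigidity statement. The gap is Step~3, which carries the real content. It is the paper's Proposition~\ref{prp:open2}, and your sketch does not prove it.

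\textbf{Step~3.} Your claim that $q$ lies on the past extension of the $p$-generators presupposes that the $\partial J^+(q)$-generator through a point of $V$ coincides with the $\partial J^+(p)$-generator there. The appeal to a.e.\ differentiability and uniqueness of generators at differentiable points is neither justified nor needed. The missing idea is to work at a point strictly before the end of a generator:
\begin{enumerate}
\item Take $x\in V$, an achronal null geodesic $\gamma$ from $p$ to $x$, and $y=\gamma(a')\in V$ with $a'<a$. Then $\gamma$ continues past $y$ inside $\partial J^+(p)$.
\item Let $\lambda$ be an achronal null geodesic from $q$ to $y$.
\item If $\lambda'(b)$ is not parallel to $\gamma'(a')$, choose $z=\lambda(b')\in V$ with $b'<b$. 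A null geodesic from $p$ to $z$, then $\lambda$ to $y$, then $\gamma$ to $\gamma(a'')$ with $a''\in(a',a)$, is a broken causal curve. Hence $p\ll\gamma(a'')$, contradicting achronality of $\gamma$.
\item If they are parallel, $p$ and $q$ lie on one null geodesic. If, say, $q\in J^+(p)\setminus\{p\}$, then $J^+(q)$ minus that geodesic lies in $I^+(p)$. This forces $V$ into the image of a single null geodesic, which is impossible for $N\ge 2$ by Baire.
\end{enumerate}
Only causal simplicity and $N\ge 2$ enter; conjugate points and future cohesiveness play no role. Choosing $z$ also requires $V$ to be open in $\partial J^+(q)$, not just in $\partial J^+(p)$. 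Your Step~2 does not give this; the paper gets it from invariance of domain.

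\textbf{Step~1.} Here you leave open exactly where future cohesiveness enters. With $B:=JF(L\setminus\{p\})$, you need some $x\in B\setminus J^+(p)$ with $I^+(x)\cap I^+(p)\neq\emptyset$. This genuinely fails without the hypothesis: in Example~\ref{ex:counterexample} with $p=\tilde p(\tau)$, every point of $B\setminus J^+(p)$ lies in $JF(\{p_1,\dots,p_4\})_{x_-}$ and has no common future with $p$. It can be supplied. Let $E:=\{x\in B\mid I^+(x)\cap I^+(p)\neq\emptyset\}$. Then:
\begin{itemize}
\item $E$ is open in $B$: if $z$ is a common future point, $I^-(z)\cap B\subset E$.
\item $E$ is closed in $B$: if $x_n\to x$ and $z\in I^+(x)\cap I^+(p)$, then eventually $x_n\in I^-(z)$.
\item $E$ contains $JF(L)\neq\emptyset$, by push-up.
\end{itemize}
So $E=B$ by connectedness of $B$. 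With that, your timelike-curve argument is a valid and slightly more direct substitute for Lemma~\ref{lem:onlyspanboundary}, bypassing Proposition~\ref{prp: int connected}.

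\textbf{Key claim.} As stated it is false when the joint future is empty. Two different spacelike pairs with disjoint futures in a slab of Minkowski space are both irredundant. Restrict the claim to $JF\neq\emptyset$ and handle the empty case by the convention $\operatorname{span}=\emptyset$.
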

In Example \ref{ex:counterexample}, we will demonstrate that global hyperbolicity alone is not sufficient for conicality to hold. Rather it is the combination of causal simplicity and \emph{future cohesiveness} (see Definition \ref{def:futureconnected}) which is sufficient for conicality. Briefly, future cohesiveness is the requirement that all joint futures in a spacetime are connected. This property is satisfied for any TIP, i.e. the timelike past of an inextendible observer in the spacetime (see Proposition \ref{prp: TIP connected}). Therefore, conicality holds for any experimental data an observer (i.e. an entity conducting experiments) will ever collect, as all real world data has to originate from somewhere within the observers past light cone. Hence, this result covers the class of spacetimes that are most relevant in the context where the conjecture originated.

The result therefore strengthens the conceptual link between Lorentzian geometry and causal modeling introduced in~\cite{grothus2024characterizing} by identifying precise mathematical conditions under which spacetimes can faithfully encode abstract causal structure.

The proof of our main result (Theorem \ref{thm: main}) is entirely based on methods from Lorentzian causality theory, which is a rich field of on-going mathematical research, see, e.g.,~\cite{minguzzi2019lorentzian} and references therein. Causal simplicity ensures that the causal relation is closed and that null geodesics in the boundary of the causal future of a point intersect the point itself. The key insight enabling this result is the fact that, for spacetime dimensions greater than or equal to three, an arbitrarily small open subset of the boundary of the causal future of a point is sufficient to identify the point uniquely (see Proposition \ref{prp:open2}). Furthermore, the result hinges on the finiteness of the sets $L_i$ and $L_j$ in the definition of conicality. There are issues when examining the definition of span for infinite sets (see Example \ref{ex:counterexample for infinite L}).

\subsection*{Organization}
The paper is organized as follows. In Section~\ref{sec:futurepast}, we introduce basic concepts and notions of causality theory necessary for the proof and introduce the concept conicality. Furthermore, we present spacetime examples that are not conical. In Section~\ref{sec:prep}, we establish a key technical result---an open subset of the boundary of a causal future uniquely determines its generating point. Section~\ref{sec:mainthm} contains the proof of the main theorem, as well as a discussion of what fails when we drop some of the assumptions. Section~\ref{sec:outlook} ends with a summary and suggests some future directions.

\subsection*{Acknowledgment} E. Ling was supported in part by Austrian Science Fund (FWF) [Grant DOI 10.55776/EFP6]. Discussions with large language models were used throughout the preparation of this manuscript. All scientific conclusions, proofs, and final text were independently verified and authored by the researchers, who assume complete responsibility for the entirety of the work.

\section{Conical Spacetimes}\label{sec:futurepast}
Let $(\mathcal{M},g)$ be a Lorentzian manifold with dimension $\operatorname{dim}\mathcal{M}=1+N$ and signature $(-, +,\dots,+)$. We assume both the manifold and metric are smooth.
The metric $g$ induces a causal structure on each tangent space of $\mathcal{M}$ by distinguishing a nonzero vector $X$ as timelike,
null, or spacelike provided $g(X,X)$ is negative, zero, or positive, respectively. A Lorentzian manifold $(\mathcal{M},g)$ is time-oriented if there is a timelike vector field $Y$ on $\mathcal{M}$. In this case, a timelike or null vector $X$ is future or past directed if $g(X,Y)$ is negative or positive, respectively. A time-oriented and connected Lorentzian manifold is referred to as a \emph{spacetime}. 

\subsection{Causal Curves}
Consider a spacetime $(\mathcal{M},g)$.
A piecewise smooth curve \linebreak $\gamma\colon [a,b] \to \mathcal{M}$, with $a < b$ in $\mathbb{R}$,
is called a \emph{future-directed causal curve} if its tangent vector $\dot{\gamma}(t)$ is future-directed and either timelike or null at any of its differentiable points $t$; at any break points, we require this to hold as one-sided limits. Analogous definitions hold for \emph{future-directed timelike} and \emph{null} curves. Likewise \emph{past-directed causal, timelike,} or \emph{null} curves are defined analogously.

\begin{Def}[Causal Relations of Points]
    Given two points $p,q \in \mathcal{M}$, we write $p \prec q$ if there is a future-directed causal curve from $p$ to $q$. We write $p \preceq q$ if $p \prec q$ or $p = q$. 
If there is a future-directed timelike curve from $p$ to $q$, then we write $p \ll q$.\footnote{Our class of curves is not general enough to prove the existence of limit curves. For this, one would need to replace the piecewise smooth assumption with a locally Lipschitz assumption. However, in this paper, we will not make use of limit curves. Moreover, the causal and timelike future sets can be defined with piecewise smooth curves; see~\cite[Corollary 2.4.11]{chrusciel2020geometry}.}
\end{Def}

\begin{Def}[Causal Past \& Future of Sets]
    For a point $p \in \mathcal{M}$, the \emph{causal future} of $p$ is defined as
\begin{equation*}
J^{+}(p) := \{ q \in \mathcal{M} \mid p \preceq q \},
\end{equation*}
and the \emph{causal past} of $p$ is
\begin{equation*}
J^{-}(p) := \{ q \in \mathcal{M} \mid q \preceq p \}.
\end{equation*}

More generally, for a subset $A \subset \mathcal{M}$, the causal future and causal past
are defined respectively by
\begin{align*}
J^{+}(A) &:= \bigcup_{p \in A} J^{+}(p), \\
J^{-}(A) &:= \bigcup_{p \in A} J^{-}(p).
\end{align*}
\end{Def}

Thus, $J^{+}(A)$ consists of all points in spacetime that can be reached from some point
of $A$ by a future-directed causal curve, while $J^{-}(A)$ consists of all points that can
influence $A$ via such curves.

\begin{Def}[Timelike Past \& Future of Sets ]
        For a point $p \in \mathcal{M}$, the \emph{timelike future} of $p$ is defined as
\begin{equation*}
I^{+}(p) := \{ q \in \mathcal{M} \mid p \ll q \},
\end{equation*}
and the \emph{timelike past} of $p$ is
\begin{equation*}
I^{-}(p) := \{ q \in \mathcal{M} \mid q \ll p \}.
\end{equation*}
Analogous definitions hold for $I^+(A)$ and $I^-(A)$ for any subset $A \subset \mathcal{M}$. It's known
that $I^+(A)$ and $I^-(A)$ are open sets in $\mathcal{M}$.
\end{Def}

If $\mathcal{M}'$ is an open subset of a spacetime $\mathcal{M}'$, then we can consider $\mathcal{M}$ as a spacetime itself. We will use the notation $I^+(p, \mathcal{M}')$ for those points $q \in \mathcal{M}'$ which can be reached by $p$ via future-directed timelike curves which are contained in $\mathcal{M}'$. Likewise with $J^+(p, \mathcal{M}')$, etc. 

The following properties hold for any spacetime. A proof can be found, e.g. in the lecture notes~\cite{BaerLorentz}.

\begin{Prp}
Let $(\mathcal{M},g)$ be a spacetime. For any subset $A \subset \mathcal{M}$, the following properties hold.
\begin{itemize}
\item[(i)] ${\rm int}\, J^+(A) = I^+(A)$.
\item[(ii)] The \emph{push-up property}: $x \preceq y \ll z$ implies $x \ll z$. A consequence is: $J^+(A) \subset \overline{I^+(A)}$ with equality if and only if $J^+(A)$ is a closed set. 
\item[(iii)] If $q \in J^+(A) \setminus I^+(A)$, then there is a future-directed null geodesic $\gamma$ starting from some point in $A$ and ending at $q$; moreover $\gamma$ does not enter $I^+(A)$.
\item[(iv)] $A$ is called a \emph{future set} if $I^+(A) \subset A$. If $A$ is a future set, then $\partial A$ is an achronal $C^0$ hypersurface~\cite[Chapter 14, Corollary 27]{o1983semi}. (A set is \emph{achronal} if no two points on it can be connected by a future-directed timelik curve.) Trivial examples of future sets are $I^+(p)$ and $J^+(p)$. 
\end{itemize}
\end{Prp}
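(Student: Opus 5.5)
The statement collects standard facts, so the plan is to reduce everything to one local lemma and then globalize. The local lemma is this: in a convex normal neighborhood $U$, the Gauss lemma shows that the position vector field $P=\exp_p^{-1}$ is orthogonal to the level sets of $q\mapsto g(P,P)$. Using this, one proves that a future-directed causal curve in $U$ from $p$ to $q$ which is not (a reparametrization of) an unbroken null geodesic satisfies $p\ll q$ within $U$. Consequently $x\preceq y\ll z$ inside $U$ gives $x\ll z$. I expect this local step, the Gauss-lemma comparison showing that a broken or non-null causal curve can be deformed into a timelike one, to be the main obstacle; everything else is soft topology.

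\textbf{(ii), push-up.} I first prove the push-up property for points. Given $x\preceq y\ll z$, cover the causal curve from $x$ to $y$ by finitely many convex neighborhoods. Starting near $y$, I use the fact that $y\ll z$ via a timelike curve. Applying the local lemma in the last neighborhood, I replace the final segment together with an initial piece of the timelike curve by a timelike curve. Iterating backwards through the finitely many neighborhoods yields $x\ll z$. For the consequence, take $q\in J^+(a)$ with $a\in A$. Points of $I^+(q)$ accumulate at $q$ and lie in $I^+(a)$ by push-up, so $J^+(A)\subset\overline{I^+(A)}$. Since $I^+(A)\subset J^+(A)$, we get $\overline{J^+(A)}=\overline{I^+(A)}$. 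Hence $J^+(A)=\overline{I^+(A)}$ holds exactly when $J^+(A)$ is closed.

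\textbf{(i).} $I^+(A)$ is open, because the local lemma shows that $I^+(p,U)$ is open in $U$ and timelike curves can be concatenated. Since $I^+(A)\subset J^+(A)$, this gives $I^+(A)\subset{\rm int}\,J^+(A)$. Conversely, let $q\in{\rm int}\,J^+(A)$. The open set $I^-(q)$ has $q$ in its closure, so it meets any neighborhood of $q$ contained in $J^+(A)$. Pick $q'$ there; then $a\preceq q'\ll q$ for some $a\in A$, and push-up gives $q\in I^+(A)$.

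\textbf{(iii).} Let $q\in J^+(A)\setminus I^+(A)$ and take a causal curve $\gamma$ from some $a\in A$ to $q$. If $\gamma$ contained a subsegment that is not an unbroken null geodesic, the local lemma on that segment would give a timelike connection. Combining this with push-up (and its time-dual) would give $a\ll q$, a contradiction. Hence $\gamma$ is a null geodesic. If some point $\gamma(t)$ lay in $I^+(A)$, then $\gamma(t)\preceq q$ together with the past version of push-up would give $q\in I^+(A)$, again a contradiction.

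\textbf{(iv).} Let $A$ be a future set. For achronality, suppose $p,q\in\partial A$ with $p\ll q$. Since $I^-(q)$ is an open neighborhood of $p$ and $p\in\overline A$, there is $p'\in A\cap I^-(q)$. Then $I^+(p')$ is an open neighborhood of $q$ contained in $I^+(A)\subset A$. So $q\in{\rm int}\,A$, contradicting $q\in\partial A$. For the $C^0$ hypersurface structure, choose local coordinates around a boundary point in which the integral curves of a timelike field are the $x^0$-lines. Each such line meets $\partial A$ exactly once, since it enters $I^+$ of points below and $I^-$ of points above. This exhibits $\partial A$ locally as a graph $x^0=f(\vec x)$, and achronality forces $f$ to be Lipschitz. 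This is the argument of~\cite[Chapter 14, Corollary 27]{o1983semi}, which we would cite.
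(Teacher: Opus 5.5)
Your outline is correct and follows the standard textbook route: the Gauss-lemma comparison in convex neighborhoods, push-up by iterating backwards along a finite cover, the soft topological consequences for (i)--(iii), and O'Neill's graph argument for (iv). The paper gives no proof of its own here and simply defers to~\cite{BaerLorentz} and~\cite[Ch.~14]{o1983semi}, where this argument appears; the only thing to add is the degenerate case of (iii) in which $q\in A$ is reached from $A$ by no nonconstant causal curve (e.g.\ $A=\{q\}$ in a causal spacetime), where no curve exists to start your argument and the ``null geodesic'' must be read as the constant curve.
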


We recall some notions from causality theory.

\begin{Def}
Let $(\mathcal{M}, g)$ be a spacetime. 
\begin{itemize}
\item[1.] The spacetime is \emph{causal} if $\gamma(a) \neq \gamma(b)$ for any future-directed causal curve $\g\colon[a,b] \to \mathcal{M}$.

\item[2.] The spacetime is \emph{causally simple} if it is causal and it's causal future $J^+(p)$ and causal past $J^-(p)$ are closed for all $p \in \mathcal{M}$. In this case, $\partial J^+(p) = J^+(p) \setminus I^+(p)$.

\item[3.] The spacetime is \emph{globally hyperbolic} if it's causal and $J^+(p) \cap J^-(q)$ is compact for all $p,q \in \mathcal{M}$.
\end{itemize}
\end{Def}

We have the following chain of implications~\cite{minguzzi2019lorentzian},
\[
\text{Global hyperbolocity} \:\: \Longrightarrow \:\: \text{causal simplicity} \:\: \Longrightarrow \:\: \text{causality}. 
\]

For a causal spacetime $(\mathcal{M},g)$, the causal relation $\preceq$ forms a poset on $\mathcal{M}$. In fact, for any poset, one can define an analogous ``causal future" $J^+$ as above. In~\cite{grothus2024characterizing} the following notion was introduced for any poset.

\begin{Def}[Spanning Points]
\label{def:spanning_points}
Let $\mathcal{T}$ be a poset and fix a nonempty subset $L \subset \mathcal{T}$. The set of \emph{spanning
elements} of $L$, denoted by $\operatorname{span}(L)$, is given by the union of all nonempty sets
$L' \subset L$ that satisfy
\begin{equation*}
JF(L') = JF(L) \: \text{ and }\: \nexists\: \text{ nonempty }\:  L'' \subsetneq L' \: \text{ such that } \: JF(L'') = JF(L),
\end{equation*}
where 
\begin{equation*}
    JF(L) := \bigcap_{x \in L} J^{+}(x)
\end{equation*}
is the \emph{joint future} of the set $L$. We will refer to the nonempty subsets $L'\subset L$ above as \emph{minimal subsets} of $L$. 

If $JF(L) = \emptyset$, then we set $\text{span}(L) = \emptyset$ as convention. We also set $JF(\emptyset) = \mathcal{T}$  since the intersection over an empty family of subsets of $\mathcal{T}$ is $\mathcal{T}$.
\end{Def}

In other words,  $\text{span}(L)$ forms the union of all possible minimal subsets of $L$ which share the same
joint future as $L$. 

\begin{Def}[Conicality]
\label{def:conicality}
Let $\mathcal{T}$ be a poset. We say $\mathcal{T}$ is \emph{conical} if for any
two finite nonempty subsets $L_i, L_j \subset \mathcal{T}$,
\begin{equation*}
\operatorname{span}(L_i) = \operatorname{span}(L_j)
\quad \Longleftrightarrow \quad
JF(L_i) = JF(L_j).
\end{equation*}
\end{Def}

\begin{Remark}
    This definition differs from~\cite[Definition 5.3]{grothus2024characterizing} by requiring an equivalence instead of only the implication $\operatorname{span}(L_i) = \operatorname{span}(L_j)
 \Longrightarrow 
JF(L_i) = JF(L_j)$. Given the text around the definition in~\cite{grothus2024characterizing} we assume that equivalence is required. This also makes sense since the ``$\Longrightarrow$'' direction is a direct implication of the definition of $\operatorname{span}(L)$ (see Proposition \ref{prp: trivial conical implication} below), while the ``$\Longleftarrow$'' direction is the challenging part. 
\end{Remark}

\begin{Prp}\label{prp: trivial conical implication}
Let $\mathcal{T}$ be a poset. If ${\rm span}(L_i) = {\rm span}(L_j)$ for two finite nonempty subsets $L_i,L_j \subset \mathcal{T}$, then $JF(L_i) = JF(L_j)$. 
\end{Prp}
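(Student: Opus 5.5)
The key observation is that every minimal subset $L'$ of a finite nonempty set $L$ has the same joint future as $L$: by definition $JF(L') = JF(L)$. Therefore $\operatorname{span}(L)$ is a union of sets all carrying the same joint future. The plan is to show that the joint future of $\operatorname{span}(L)$ itself equals $JF(L)$. Then equal spans immediately give equal joint futures.

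First I will handle the degenerate case. If $JF(L_i) = \emptyset$, the convention gives $\operatorname{span}(L_i) = \emptyset$. If also $\operatorname{span}(L_j)=\emptyset$, I need $JF(L_j)=\emptyset$. To get this, I will argue that whenever $JF(L)\neq\emptyset$, the span is nonempty. Since $L$ is finite and nonempty, the family of nonempty subsets $L'\subset L$ with $JF(L')=JF(L)$ is nonempty (it contains $L$ itself) and finite. Hence it has a member of minimal cardinality. Such a member has no nonempty proper subset with the same joint future, so it is a minimal subset and $\operatorname{span}(L)\neq\emptyset$. Consequently $\operatorname{span}(L)=\emptyset$ if and only if $JF(L)=\emptyset$, which settles the case where either span is empty.

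Next, assume both spans are nonempty, so $JF(L_i)$ and $JF(L_j)$ are nonempty. For a nonempty finite $L$, let $S=\operatorname{span}(L)=\bigcup_k L'_k$ over the minimal subsets $L'_k$. I will show $JF(S)=JF(L)$. Since $L'_1\subset S\subset L$ and $JF$ is inclusion-reversing, we get
\[
JF(L)\subset JF(S)\subset JF(L'_1)=JF(L),
\]
so equality holds. Applying this to $L_i$ and $L_j$, equal spans give
\[
JF(L_i)=JF(\operatorname{span}(L_i))=JF(\operatorname{span}(L_j))=JF(L_j).
\]

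There is no serious obstacle here. The only point needing care is the existence of a minimal subset, which is where finiteness of $L$ is used, together with the convention for empty joint futures.
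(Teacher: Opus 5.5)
Your proof is correct and takes essentially the same route as the paper: when the span is nonempty you show $JF(\operatorname{span}(L)) = JF(L)$, using the inclusion-reversal sandwich $JF(L)\subset JF(S)\subset JF(L_1')=JF(L)$ where the paper computes $\bigcap_k JF(L^k)=JF(L)$ directly, and for the empty-span case you use finiteness to obtain a minimal subset whenever $JF(L)\neq\emptyset$. Your minimal-cardinality argument simply spells out the existence claim that the paper states without detail.
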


\begin{proof}
First assume 
\[
{\rm span}(L_i)={\rm span}(L_j)\neq\emptyset.
\]
Write
\[
{\rm span}(L_i)=L_i^1\cup\dotsb\cup L_i^n,
\]
where each $L_i^k$ is a minimal subset of $L_i$. Since
$JF(L_i^k)=JF(L_i)$ for every $k$, we have
\[
\begin{aligned}
\bigcap_{x\in {\rm span}(L_i)}J^+(x)
&=
\bigcap_{k=1}^n\bigcap_{x\in L_i^k}J^+(x)\\
&=
\bigcap_{k=1}^n JF(L_i^k)
=
JF(L_i).
\end{aligned}
\]
Similarly,
\[
\bigcap_{x\in {\rm span}(L_j)}J^+(x)=JF(L_j).
\]
Since ${\rm span}(L_i)={\rm span}(L_j)$, it follows that
$JF(L_i)=JF(L_j)$.

Now assume
\[
{\rm span}(L_i)={\rm span}(L_j)=\emptyset.
\]
We claim that $JF(L_i)=JF(L_j)=\emptyset$. Indeed, if, say,
$JF(L_i)\neq\emptyset$, then since $L_i$ is finite and nonempty,
there exists a minimal nonempty subset $L_i'\subseteq L_i$ satisfying
$JF(L_i')=JF(L_i)$. Hence
$L_i'\subseteq {\rm span}(L_i)$, contradicting
${\rm span}(L_i)=\emptyset$. Thus $JF(L_i)=\emptyset$, and the same
argument applies to $L_j$. Therefore $JF(L_i)=JF(L_j)$. (The last argument relies on the finiteness of $L_i$ and $L_j$ and
need not hold for infinite sets; see Example
\ref{ex:counterexample for infinite L}.)
\end{proof}

 The next proposition shows that a necessary condition for conicality is that $L$ has only one minimal subset.
\begin{Prp}
If a poset $\mathcal{T}$ is conical, then any finite nonempty subset $L \subset \mathcal{T}$ with $JF(L) \neq \emptyset$ has a unique minimal subset.
\end{Prp}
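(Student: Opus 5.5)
The argument is by contradiction and uses only the ``$\Longleftarrow$'' direction of conicality, applied to a minimal subset and $L$ itself. Let $L$ be finite and nonempty with $JF(L) \neq \emptyset$. Since $L$ is finite and nonempty, the argument in the proof of Proposition \ref{prp: trivial conical implication} gives at least one minimal subset $L' \subseteq L$. Suppose there are two distinct minimal subsets $L^1 \neq L^2$ of $L$. We will derive a contradiction with conicality.

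\textbf{Step 1: a minimal subset is its own span.} Let $L^1$ be any minimal subset of $L$, so $JF(L^1) = JF(L) \neq \emptyset$. We compute $\operatorname{span}(L^1)$. A nonempty $L'' \subseteq L^1$ with $JF(L'') = JF(L^1)$ cannot be proper, by the minimality of $L^1$ in $L$. Indeed, a proper $L'' \subsetneq L^1$ with $JF(L'') = JF(L^1) = JF(L)$ would contradict the defining condition of $L^1$. Hence the only minimal subset of $L^1$ is $L^1$ itself, and $\operatorname{span}(L^1) = L^1$. Likewise $\operatorname{span}(L^2) = L^2$.

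\textbf{Step 2: apply conicality.} We have $JF(L^1) = JF(L) = JF(L^2)$, and $L^1, L^2$ are finite and nonempty. The ``$\Longleftarrow$'' direction of Definition \ref{def:conicality} then yields $\operatorname{span}(L^1) = \operatorname{span}(L^2)$. By Step 1 this says $L^1 = L^2$, contradicting our assumption. So $L$ has a unique minimal subset.

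Alternatively, the same reasoning applied to the pair $L^1$ and $L$ gives $\operatorname{span}(L) = L^1$ for every minimal $L^1$, which again forces uniqueness. There is no real obstacle here. The only points needing care are that the defining condition of minimality, which refers to $JF(L)$, transfers to $L^1$ as the ambient set, because $JF(L^1) = JF(L)$; and that finiteness guarantees a minimal subset exists at all.
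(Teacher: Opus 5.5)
Your proof is correct and follows the paper's argument: assume two minimal subsets, note that they have the same joint future, apply conicality to get equal spans, and use that a minimal subset is its own span. Your Step 1 spells out the identity $\operatorname{span}(L^1)=L^1$ that the paper asserts directly, including the check $JF(L^1)\neq\emptyset$ needed to avoid the empty-span convention, and your existence remark via finiteness is a harmless addition.
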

\begin{proof}
    Since $JF(L) \neq \emptyset$, we have $\text{span}(L) \neq \emptyset$. Suppose $L$ has two minimal subsets $L'$ and $L''$. Then, by definition, we have $JF(L') = JF(L'')$. Conicality implies $\text{span}(L') = \text{span}(L'')$. However, since $L'$ and $L''$ are minimal, it follows that $\text{span}(L') = L'$ and $\text{span}(L'') = L''$. Hence $L' = L''$.
\end{proof}

Now let's specialize to the case where the poset is given by a causal spacetime $(\mathcal{M},g)$. In essence, conicality for spacetimes means that it is possible to recover the origin of each lightcone contributing to the shape of the joint future from the shape alone.

The following examples show that the implication ``$\Longleftarrow$" in the definition of conicality does not always hold. The first example demonstrates that this property fails in general for two-dimensional spacetimes.

\begin{Example}[\cite{grothus2024characterizing}]\label{ex:2d}
Let $(\mathcal{M},g)$ be two-dimensional Minkowski space: $\mathcal{M} = \mathbb{R}^2$ with canonical coordinates $(t,x)$ and metric $g = -dt^2 + dx^2$. In $(t,x)$ coordinates, consider the points $a = (0,-1)$, $b = (1,0)$, and $c = (0,1)$. Let $L_i = \{a,c\}$ and $L_j = \{b\}$. Then $JF(L_i) = JF(L_j)$ but ${\rm span}(L_i) = L_i \neq L_j = {\rm span}(L_j)$.
\end{Example}

The following example shows that we only want to consider \emph{finite} sets in the definition of conicality.

\begin{Example}\label{ex:counterexample for infinite L}
     Consider Minkowski space in $1+N$ dimensions with $N \geq 2$. Let $\Sigma$ be the Cauchy surface $\Sigma := \{t = 0\}$, and let $L$ be the infinite set $L:=\Sigma\cap \partial J^-(x)$ for some point $x \in I^+(\Sigma)$. Then $JF(L)=J^+ (x)$ and we are back to a similar situation as in Example \ref{ex:2d}. Let $\tilde{\Sigma}$ be a different Cauchy surface that intersects $\partial J^-(x)$ in $\tilde L$. Then $JF(\tilde{L})=JF(L)=J^+ (x)$ but we can always choose $\tilde \Sigma$ such that $L \cap \tilde{L}=\emptyset$. 
     
     However, it gets worse. Recall that the set $\operatorname{span}(L)$ is defined as the union of all minimal sets. However, there is no minimal set of $L$---every dense subset $L'\subset L,$ i.e., $\overline{L'}=L$, satisfies $JF(L')=JF(L)$ and for a subset of $L^*\subset L$ to satisfy $JF(L^*)=JF(L)$ requires $L^*$ to be dense in $L$. And a dense subset of $L$ is never minimal since removing a single point $p$ just generates another dense subset. We therefore conclude that $\operatorname{span}(L') = \operatorname{span}(L)=\emptyset$  but $M= JF (\emptyset)=JF(\operatorname{span}(L))\neq JF(L)=J^+(x)$. Hence the span no-more captures the essential structure of the joint future of the set under consideration.

    From these considerations, we see that Proposition \ref{prp: trivial conical implication} can fail for infinite sets: Let $y \neq x$ be another point in $I^+(\Sigma)$. Set $L_x : = \Sigma \cap \partial J^-(x)$ and $L_y := \Sigma \cap \partial J^-(y)$. Then ${\rm span}(L_x) = \emptyset = {\rm span}(L_y)$ while $JF(L_x) = J^+(x) \neq J^+(y) = JF(L_y)$. 
\end{Example}

Example \ref{ex:counterexample} shows that conicality does not need to hold in higher dimensional spacetimes even if they are globally hyperbolic.
In~\cite{grothus2024characterizing} it was conjectured that causal spacetimes which are homotopic to Minkowski space are conical, provided that the spacetime dimension satisfies $1 + N \geq 3$. Remark \ref{rem: coun-exam} shows how the following example can be modified to give a counterexample to this conjecture.

\begin{Example}\label{ex:counterexample}
This example demonstrates that the full de Sitter space, despite being globally hyperbolic, is not conical. For simplicity, we will work in three-dimensional de Sitter space. Let $(\mathcal{M},g)$ be a portion of the three-dimensional static Einstein universe with $\mathcal{M}\simeq (-\frac{\pi}{2},\frac{\pi}{2}) \times S^2$ and the line element given by 
\begin{equation*}
    ds^2=-dt^2 + d\Omega^2 
\end{equation*}
 where $d\Omega^2$ is the standard line element on $S^2$. (Recall that de Sitter space is conformal to $(\mathcal{M},g)$. Since conformal transformations preserve the causal relation, and hence conicality, it suffices to work with this conformal representation.) Consider an embedding of $S^2$ as the unit sphere into $\mathbb{R}^3$ and consider the spherical coordinate system in the latitude convention, i.e. $(x= \cos{(\theta)}\cos{(\phi)}, y= \cos{\theta}\sin(\phi), z= \sin(\theta))$ with $\phi\in [0,2\pi)$ and $\theta\in (-\frac{\pi}{2},\frac{\pi}{2})$. Let $p_1=(t_0,\tilde p)\in \mathcal{M}$ with $\tilde p= (0,0,1)$, i.e. $\theta=\frac{\pi}{2}$.  The causal future of $p_1$ is then given by $J^+(p_1)=\{q\in \mathcal{M}|t(q)\geq t_0, \theta(q)\geq \frac{\pi}{2}- (t-t_0) \}$. Alternatively in $(t,x,y,z)$ coordinates $J^+(p_1)=\left\{q\in \mathcal{M}|t(q)\geq t_0, z(q)\geq \sin\left(\frac{\pi}{2}- (t-t_0)\right) \right\}$. The boundary of the causal future is then given by $$\partial J^+(p_1)= \left\{q\in \mathcal{M}|t(q)\geq t_0, \theta(q)= \frac{\pi}{2}- (t-t_0) \Leftrightarrow z(q)= \sin\left(\frac{\pi}{2}- (t-t_0)\right) \right\} $$ 

 Next, we consider the joint future of the two antipodal points $p_1=(t_0,0,0,1) $ and  $p_2= (t_0,0,0,-1)$. The joint future of $p_1$ and $p_2$ is then given by $$JF(\{p_1,p_2\})=\left\{q\in \mathcal{M}| t(q)\geq t_0+\frac{\pi}{2}, |z|\leq \sin{\left(t-t_0-\frac{\pi}{2}\right)} \right\}\, .$$ 
 Note that for $t_0\geq 0$ we have that $JF(\{p_1,p_2\})=\emptyset$ because $t\in\left(-\frac{\pi}{2},\frac{\pi}{2}\right)$. At any time $t\geq t_0+\frac{\pi}{2}$ the joint future is given by an interval of  latitudes \[\theta\in\left(\frac{\pi}{2}- (t-t_0), -\frac{\pi}{2}+ (t-t_0)\right),\]
 i.e. a  band around the equator. In the auxiliary cartesian coordinates the band is given by the condition $|z|\leq \sin{\left((t-t_0)-\frac{\pi}{2}\right)}$. To construct a counterexample we therefore need to assume that $t_0<0$. 
 \vspace{0.5cm}

\input{Tikz1}

 We now consider two additional points, $p_3=(t_0,0,1,0)$ and $p_4=(t_0,0,-1,0)$. Hence they are antipodal points in the equatorial plane relative to $p_1,p_2$. The joint future of $p_3,p_4$ is given by $$JF(\{p_3,p_4\})=\left\{q\in \mathcal{M}| t(q)\geq t_0+\frac{\pi}{2}, |y|\leq \sin{\left(t-t_0-\frac{\pi}{2}\right)} \right\}\, .$$ 
 
At any time $t\geq t_0+\frac{\pi}{2}$ this set consists of a band with $|y|\leq \sin{\left((t-t_0)-\frac{\pi}{2}\right)}$. Hence, at any time $t$ the joint futures $JF(\{p_1,p_2\})$ and $JF(\{p_3,p_4\})$  consist of two bands centered around orthogonal planes and we have that
\begin{align*}
    JF(\{p_1,&p_2,p_3,p_4\})=\\ &=\left\{q\in \mathcal{M}| t(q)\geq t_0+\frac{\pi}{2},|y|\leq \sin{\left(t-t_0-\frac{\pi}{2}\right)}, |z|\leq \sin{\left(t-t_0-\frac{\pi}{2}\right)} \right\}\, .
\end{align*} 
 We will now show that for $0<-t_0<\frac{\pi}{4}$ the joint future $JF(\{p_1,p_2,p_3,p_4\})$ has two disconnected components. This simply follows from the fact that $$x^2\geq 1-2\sin{2\left((t-t_0)-\frac{\pi}{2}\right)}^2=\cos{2\left((t-t_0)-\frac{\pi}{2}\right)}$$ and the observation that $\cos{2\left((t-t_0)-\frac{\pi}{2}\right)}>0$ for all $t\in (t_0+\frac{\pi}{2}, \frac{\pi}{2})$ as long as $t<\frac{3\pi}{4}+t_0$ which is satisfied for all $t\in(-\frac{\pi}{2},\frac{\pi}{2} )$ for $t_0\in (-\frac{\pi}{4},0)$. Hence $x^2>0$ for all $t\in (t_0+\frac{\pi}{2}, \frac{\pi}{2})$ and we have that  
 $$JF(\{p_1,p_2,p_3,p_4\})=JF(\{p_1,p_2,p_3,p_4\})_{x_+}\cup JF(\{p_1,p_2,p_3,p_4\})_{x_-}$$ with $JF(\{p_1,p_2,p_3,p_4\})_{x_\pm}$ contained on separate hemispheres of $S^2$ with respect to the plane $x=0$.
 \vspace{0.5cm}

\input{Tikz2}

For a fixed time $t$ the minimum value of $x$ in $JF(\{p_1,p_2,p_3,p_4\})_{x_+}$ is attained in the four corners 
$$\left(t, \sqrt{\cos{2\left((t-t_0)-\frac{\pi}{2}\right)}}, \pm \sin{\left(t-t_0-\frac{\pi}{2}\right)}, \pm\sin{\left(t-t_0-\frac{\pi}{2}\right)}\right)\, $$
and the maximum value of 
$x$ in $JF(\{p_1,p_2,p_3,p_4\})_{x_-}$ is attained in the four corners 
$$\left(t, -\sqrt{\cos{2\left((t-t_0)-\frac{\pi}{2}\right)}}, \pm \sin{\left(t-t_0-\frac{\pi}{2}\right)}, \pm\sin{\left(t-t_0-\frac{\pi}{2}\right)}\right)\, .$$
Now, assume that $t_0 \in (-\frac{\pi}{4},0)$ and hence $JF(\{p_1,p_2,p_3,p_4\})$ consists of two disconnected components $JF(\{p_1,p_2,p_3,p_4\})_{x_+}$ and $JF(\{p_1,p_2,p_3,p_4\})_{x_-}$. 
We want to choose a point $\tilde p$ such that $JF(\{p_1,p_2,p_3,p_4\})_{x_+}\subset J^+(\tilde p)$ but $\partial JF(\{p_1,p_2,p_3,p_4\})_{x_+}\cap \partial J^+(\tilde p)=\emptyset=J^+(\tilde p)\cap JF(\{p_1,p_2,p_3,p_4\})_{x_-}$. 
We will now show that $\tilde p(\tau)= (\tau,1,0,0)$ satisfies these conditions for all $\tau \in \left[ -\frac{\pi}{2}+\arccos{\sqrt{\cos{(2t_0 )}}} , \frac{\pi}{2}-\arccos{\sqrt{\cos{(2t_0 )}}} \right]$.

 The intersection of $J^+(\tilde p(\tau))$ with a constant-$t$ slice is given by
$$ x\geq \cos(t-\tau). $$
Hence for the two conditions to be true we need 
$$ -\sqrt{\cos\!\left(2\left((t-t_0)-\frac{\pi}{2}\right)\right)} < \cos(t-\tau) < \sqrt{\cos\!\left(2\left((t-t_0)-\frac{\pi}{2}\right)\right)}$$
 to hold for all $t\in [t_0+\frac{\pi}{2},\frac{\pi}{2})$. We will now show that the tightest bounds for $\tau$ are obtained in the limit $t\rightarrow \frac{\pi}{2}$.

Since \(t_0\in(-\frac{\pi}{4},0)\) and \(t\in[t_0+\frac{\pi}{2},\frac{\pi}{2})\), the desired inclusion at \(t=t_0+\frac{\pi}{2}\) requires \(\tau<t_0+\frac{\pi}{2}\). Together with \(\tau>-\frac{\pi}{2}\), this implies
$ 0<t-\tau<\pi $
throughout the range under consideration.
 Therefore, since \(\cos\) is strictly decreasing on \((0,\pi)\), introducing
$$ \alpha(t):=\arccos\sqrt{\cos\!\left(2\left((t-t_0)-\frac{\pi}{2}\right)\right)} $$
the above inequality becomes
$$ t-\pi+\alpha(t)<\tau<t-\alpha(t).$$
To show monotonicity we take the derivative with respect to $t$
$$\alpha'(t)
=
\frac{
\sqrt{2}\cos\!\left((t-t_0)-\frac{\pi}{2}\right)
}{
\sqrt{\cos\!\left(2\left((t-t_0)-\frac{\pi}{2}\right)\right)}
}.$$
Since $(t-t_0)-\frac{\pi}{2}\in(0,\pi/4)$ everything is positive and we have ${
\sqrt{2}\cos\!\left((t-t_0)-\frac{\pi}{2}\right)
}>{
\sqrt{\cos\!\left(2\left((t-t_0)-\frac{\pi}{2}\right)\right)}
}$ and therefore $\alpha'(t)>1$. Since $$\frac{d}{dt}\bigl(t-\pi+\alpha(t)\bigr)
=1+\alpha'(t)>0 \,\,\text{ and }\,\, \frac{d}{dt}\bigl(t-\alpha(t)\bigr)
=1-\alpha'(t)<0,$$ the lower bound is strictly increasing, while the upper bound is strictly decreasing. Thus the admissible interval for $\tau$ strictly shrinks as $t\to\frac{\pi}{2}$. Because $t=\frac{\pi}{2}$itself is not in the spacetime and the monotonicity is strict, equality with either limiting bound still gives strict inequalities for every  $t<\frac{\pi}{2}$. 

In summary, for $\tau_1, \tau_2 \in \left[ -\frac{\pi}{2}+\arccos{\sqrt{\cos{(2t_0 )}}} , \frac{\pi}{2}-\arccos{\sqrt{\cos{(2t_0 )}}} \right]$, $\tau_1\neq\tau_2$
 we have that $$JF(\{p_1,p_2,p_3,p_4,\tilde p(\tau_1)\})=JF(\{p_1,p_2,p_3,p_4\})_{x_+}= JF(\{p_1,p_2,p_3,p_4,\tilde p(\tau_2)\})$$ 
 but since each of the points $p_1,p_2,p_3,p_4,\tilde p(\tau)$ contributes one of the defining restrictions above, the sets are minimal by construction and we have 
\begin{align*}
    \operatorname{span}\{p_1,p_2,p_3,p_4,\tilde p(\tau_1)\}&=\{p_1,p_2,p_3,p_4,\tilde p(\tau_1)\}\\&\neq \{p_1,p_2,p_3,p_4,\tilde p(\tau_2)\}=\operatorname{span}\{p_1,p_2,p_3,p_4,\tilde p(\tau_2)\}
\end{align*}
in contradiction to conicality, despite the spacetime being globally hyperbolic.
 
\vspace{0.5cm}

 \input{Tikz3}

\end{Example}
\begin{Remark}\label{rem: coun-exam}
    By removing a point from the sphere $S^2$, and hence a timelike $t$-line in the spacetime, we obtain a non-conical example that's homotopic (in fact homeomorphic) to Minkowski spacetime. This provides a counterexample to~\cite[Conjecture 5.5]{grothus2024characterizing}. Of course, by removing this timelike line from the spacetime, the example is no longer globally hyperbolic, so the conjecture may still hold under global hyperbolocity as suggested in the paragraphs in~\cite{grothus2024characterizing} following the conjecture.
\end{Remark}
In summary the examples show that neither being homotopic to Minkowski space nor being globally hyperbolic alone are sufficient for the spacetime to satisfy conicality. 
In the following, we will prove under which conditions the conjecture does hold true. 

\section{Preparations}\label{sec:prep} 
We begin with the following key observation, which relies crucially on the spacetime dimension satisfying $1+N \geq 3$.

\begin{Prp}\label{prp:open2}
Let $(\mathcal{M},g)$ be a causally simple spacetime of dimension $1 + N \geq 3$. If there is a nonempty $U \subset \partial J^+(p) \cap \partial J^+(q)$ which is open in both $\partial J^+(p)$ and $\partial J^+(q)$, then $p = q$. 
\end{Prp}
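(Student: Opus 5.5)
The plan is to argue by contradiction through the null generators of the two boundaries, using the standard corner-cutting fact. If a future-directed causal curve from $x$ to $y$ is not, up to reparametrization, a single unbroken null geodesic, then $x \ll y$ (see, e.g.,~\cite[Chapter 10, Proposition 46]{o1983semi}). Causal simplicity gives $\partial J^+(p) = J^+(p)\setminus I^+(p)$, and likewise for $q$. By property (iii) of the basic proposition, every $r \in U$ is joined to $p$ by a future-directed null geodesic $\gamma_p^r$ lying in $\partial J^+(p)$, and to $q$ by one, $\gamma_q^r$, lying in $\partial J^+(q)$.

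\emph{Step 1 (generators agree near $U$).} Fix $r\in U$. Points $s$ of $\gamma_p^r$ just before $r$ lie in $\partial J^+(p)$. Because $U$ is open in $\partial J^+(p)$, such $s$ lie in $U$ once they are close enough to $r$, and hence $s \in \partial J^+(q)$, so $q \preceq s$. Concatenate a null geodesic from $q$ to $s$ with the segment of $\gamma_p^r$ from $s$ to $r$. This is a causal curve from $q$ to $r$. If it were broken at $s$, corner-cutting would give $q \ll r$, contradicting $r\in \partial J^+(q) = J^+(q)\setminus I^+(q)$. Hence the null geodesic from $q$ to $s$ continues smoothly into $\gamma_p^r$. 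So $q$ lies on the maximal backward extension of the null geodesic $\gamma_p^r$, and so does $p$.

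\emph{Step 2 (two different generators).} Next I find $r_1,r_2\in U$ such that $\gamma_1:=\gamma_p^{r_1}$ and $\gamma_2:=\gamma_p^{r_2}$ are not reparametrizations of one null geodesic. This is where $N\ge 2$ enters. By property (iv), $\partial J^+(p)$ is a $C^0$ hypersurface, so $U$ is a topological manifold of dimension $N\geq 2$. If every $r\in U$ lay on one and the same inextendible null geodesic, then $U$ would be contained in the image of an injective curve. That is impossible for an open subset of an $N$-manifold with $N\ge2$, by invariance of domain. In $1+1$ dimensions this step fails, consistent with Example \ref{ex:2d}.

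\emph{Step 3 (conclusion).} Suppose $p\neq q$. By Step 1, $p$ and $q$ both lie on $\gamma_1$ and on $\gamma_2$, extended backward as needed. Causality and the fact that they lie on a common null geodesic imply, after relabeling, that $p \prec q$ strictly. Now take the curve that runs from $p$ to $q$ along $\gamma_2$ and then from $q$ to $r_1$ along $\gamma_1$. If the tangent directions of $\gamma_1$ and $\gamma_2$ at $q$ differ, this curve is broken, so $p\ll r_1$, contradicting $r_1\in\partial J^+(p)$. If the directions agree, then $\gamma_1$ and $\gamma_2$ coincide as geodesics, contradicting Step 2. Hence $p=q$.

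\emph{Main obstacle.} I expect the delicate point to be Step 1. One has to check carefully that the points $s$ really lie in $U$, which uses the openness of $U$ in $\partial J^+(p)$, and that the corner-cutting fact applies to the piecewise smooth curves used in this paper. The relabeling in Step 3 also needs care: I need $p$ and $q$ to lie on the relevant geodesic segments in the correct causal order, which comes from $p,q \preceq r_i$ and causality.
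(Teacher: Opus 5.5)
Your proof is correct and uses the same two ingredients as the paper. The first is corner-cutting, which forces the generator from $q$ through a point of $U$ to continue the generator from $p$; this is the mechanism of the paper's non-parallel case with $p$ and $q$ swapped. The second is the fact that for $N\geq 2$ an open piece of $\partial J^+(p)$ cannot lie on a single null geodesic, together with the observation that if $p\prec q$ lie on a common generator, then every point of $J^+(q)$ off that generator lies in $I^+(p)$. The difference is only organizational: you prove alignment at every point of $U$ and then pick two distinct generators, whereas the paper splits into parallel and non-parallel cases at one point. In the topological step, an injective curve need not be an embedding, so invariance of domain alone is too terse; the Baire argument in the paper's footnote is the clean justification.
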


\begin{proof}
Fix $x$ in $U$. Let $\g \colon [0,a] \to \mathcal{M}$ be an achronal null geodesic from $p$ to $x$. (Such a $\gamma$ exists by causal simplicity.) Although $x$ can be a null cut point along $\gamma$, every point on $\gamma$ before $x$ is not a null cut point. Since $U$ is open, there is a $y := \g(a')$ in $U$ for some $a' < a$ such that $y$ is not a null cut point of $\gamma|_{[0,a']}$. Let $\lambda \colon [0,b]  \to \mathcal{M}$ be an achronal null geodesic from $q$ to $y$. Either $\g'(a')$ and $\lambda'(b)$ are parallel or they are not. First suppose they are parallel. Then $\gamma$ and $\lambda$ parameterize a portion of the same inextendible null geodesic. Either $p$ is in the future of $q$ or vice versa. Without loss of generality, suppose the latter, $q \in J^+(p)$. Let $\tilde{\lambda}$ denote the future-inextendible null geodesic which extends $\lambda$. If $ q \neq p$, then $J^+(q) \setminus \text{Im}(\tilde{\lambda})$ is contained in $I^+(p)$. Therefore each point in $U$ must lie in the image of $\tilde{\lambda}$, but $\tilde{\lambda}$ is a smooth immersed curve with nowhere vanishing derivative, and the image of such a curve cannot contain\footnote{Indeed, write the parameter interval as a countable union of compact intervals $K_n$. Then $\widetilde\lambda(K_n)$ is contained in a finite union of embedded one-dimensional submanifolds, so $U\cap\widetilde\lambda(K_n)$ is closed in $U$ and has empty interior when $N\ge2$. Since $U$ is a Baire space, it cannot be the countable union of these nowhere-dense sets.} a nonempty open subset of the $N$-dimensional topological manifold $\partial J^+(p)$ with $N \geq 2$. Thus $q = p$. 

Now suppose $\gamma'(a')$ and $\lambda'(b)$ are not parallel.
 Take $z: = \lambda(b')$ in $U$ for some $b' < b$. Let $\sigma \colon [0,c] \to \mathcal{M}$ be an achronal null geodesic from $p$ to $z$. Consider the causal curve $\alpha$ from $p$ to $\gamma(a'')$, for some $a'' \in (a', a)$, which is the concatenation of $\sigma$ with the portion of $\lambda$ from $z$ to $y$ and the concatenation of $\gamma$ from $y$ to $\gamma(a'')$. The curve $\alpha$ is a broken null geodesic. Consequently, there is a timelike curve from $p$ to $\gamma(a'')$, which is a contradiction.
\end{proof}

\input{Tikz4}

\begin{Remark}
    Proposition \ref{prp:open2} fails in $1+1$ dimensions, because the light cone is only one dimensional 
    and hence the last step in the first paragraph of the proof is not true as every open subset $U$ of the boundary $\partial J^+(p)$ contains subsets $\tilde{U}\subseteq U$ that are properly contained in  $\tilde{\g}_q$.

    Many of the arguments below hinge on this one key observation which is the root of the break down of the main theorem in $1+1$ dimensions. Since any point in the $\operatorname{span}(L)$ just contributes a single null geodesic segment to the boundary of $JF(L)$. This is not enough to determine the origin of this light ray uniquely. 
\end{Remark}
Intuitively the statement in Proposition \ref{prp:open2} can be understood as follows: any two open segments of two different null-generators in $\partial J^+(p)$ uniquely identify $p$ through the first intersection in the past of their maximal past extension. We get the following corollary to the above proposition.
\begin{Remark}\label{rem:onedimension}
    A brief way to summarize Proposition \ref{prp:open2} is the following: any open subset of $\partial J^+(p)$ identifies $p$ uniquely. Moreover, as the proof shows, if $p\neq q$ and $q\in J^+(p)$, then $\partial J^+(p)\cap\partial J^+(q)$ is empty, contains only $q$ itself, or it is a one-dimensional subset.
\end{Remark}

% In summery, if $q$ is on the boundaryof the causal future of $p$, then  $\partial J^+(p)\bigcap\partial J^+(q)$ only intersect along the future extension of $\tilde \lambda$ until it runs into a null cut locus.

\section{Proof of the Main Theorem}\label{sec:mainthm}
Here we restate the definitions \ref{def:spanning_points} and \ref{def:conicality} in the language of causality theory for spacetimes. \

In the following we are interested in finite nonempty subsets of spacetime, $L\subset \mathcal{M}$. We wish to characterize these sets in terms of their joint futures,
\begin{equation*}
    JF(L)=\bigcap_{p\in L }J^+(p),
\end{equation*}
or more precisely their spanning set, i.e., the union of all  the smallest nonempty subsets $L'\subset L$ such that the joint future is determined by the subset, i.e. $JF(L')=JF(L)$ and there exists no nonempty subset $L''\subsetneq L'$ such that $JF(L'')=JF(L)$. 
 We want to show under which conditions $JF(L)$ uniquely determines the spanning set. That is, if two finite nonempty subsets $L_i\subset \mathcal{M}$ and $L_j\subset \mathcal{M}$ have the same joint future $JF(L_i)=JF(L_j)$, then do they have the same spanning set, i.e., $\operatorname{span}(L_i)=\operatorname{span}(L_j)$? 

We start by noting the following Lemma. 
\begin{Lemma}\label{lem:spacelike}
    Let $(\mathcal{M},g)$ be a causal spacetime and let $L$ be a finite nonempty subset of $\mathcal{M}$. If $L'\subset L$ is a minimal subset of $L$, then all points in $L'$ are mutually spacelike separated.
\end{Lemma}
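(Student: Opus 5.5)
The plan is to argue by contradiction using only the poset structure and the minimality of $L'$. Suppose two distinct points $p, q \in L'$ are not spacelike separated, i.e.\ they are causally related. Here ``spacelike separated'' is understood as: neither $p \preceq q$ nor $q \preceq p$. Without loss of generality we may then assume $p \preceq q$ with $p \neq q$, so $p \prec q$.

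The key observation is that if $p \preceq q$, then transitivity of $\preceq$ gives $J^+(q) \subset J^+(p)$. Hence $J^+(p) \cap J^+(q) = J^+(q)$, and therefore
\[
JF(L' \setminus \{p\}) = \bigcap_{x \in L'\setminus\{p\}} J^+(x) = \bigcap_{x \in L'} J^+(x) = JF(L').
\]
This works because $q$ remains in $L' \setminus \{p\}$ and its future is already contained in $J^+(p)$. The set $L'' := L' \setminus \{p\}$ is nonempty, since it contains $q \neq p$, and it is a proper subset of $L'$. Moreover $JF(L'') = JF(L') = JF(L)$, which contradicts the minimality of $L'$ in Definition \ref{def:spanning_points}. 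Thus no two distinct points of $L'$ are causally related, i.e.\ they are mutually spacelike separated.

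The role of causality is only to guarantee that $\preceq$ is a genuine partial order on $\mathcal{M}$, so that the definitions of $JF$ and minimal subsets make sense as in the poset setting. In fact the argument above uses only transitivity and does not even need antisymmetry. The only mild subtlety is making sure that $L''$ is nonempty, which is guaranteed by the assumption that $p$ and $q$ are distinct. There is no genuine obstacle; the whole proof is this one-line removal argument, and the case $L'$ a singleton is vacuous.
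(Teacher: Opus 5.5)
Your proof is correct and is essentially the paper's argument: if two distinct points of $L'$ are causally related, $p \preceq q$, then $J^+(q) \subset J^+(p)$, so removing $p$ leaves the joint future unchanged, which contradicts minimality. Your remarks that $L' \setminus \{p\}$ is nonempty because it contains $q$, and that causality is only needed to make $\preceq$ a partial order, also match the paper's proof.
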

\begin{proof}
    Let $L'=\{p_0,\dots, p_n\}$ and assume w.l.o.g. that $p_0\in J^-(p_i)$ for some $i\in [1,n]$. Then we have that $J^+(p_0)\bigcap J^+(p_i)=J^+(p_i)$ and therefore 
    \begin{equation*}
        \bigcap_{i\in [0,n]}J^+(p_i)=\bigcap_{i\in [1,n]}J^+(p_i)
    \end{equation*}
    in contradiction to the assumption that $L'$ is a minimal subset of $L$. (The causality hypothesis is only assumed so that the causal relation $\preceq$ forms a poset on $\mathcal{M}$.)
\end{proof}

\begin{Prp}\label{prp:boundary2}
Let $(\mathcal{M}, g)$ be a causally simple spacetime. Let $L \subset \mathcal{M}$ be finite and nonempty. If $JF(L)$ is nonempty, then $\partial JF(L)$ is a $C^0$ hypersurface. Moreover, for each $q \in \partial JF(L)$, there exists a $p \in L$ such that $q \in \partial J^+(p)$.
\end{Prp}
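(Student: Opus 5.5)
The plan is to show that $JF(L)$ is a future set whose boundary is nonempty and proper, and then to apply item (iv) of the basic proposition on causal futures. After that, the boundary points are located by a direct point-set argument using closedness of each $J^+(p)$.

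\textbf{Step 1: $JF(L)$ is a closed future set.} Write $L=\{p_1,\dots,p_n\}$. Each $J^+(p_k)$ is a future set: if $x\in J^+(p_k)$ and $x\ll y$, then $p_k\preceq y$. An intersection of future sets is again a future set, so $I^+(JF(L))\subset JF(L)$. By causal simplicity each $J^+(p_k)$ is closed, hence $JF(L)$ is closed.

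\textbf{Step 2: $\partial JF(L)$ is a $C^0$ hypersurface.} Item (iv) gives that $\partial JF(L)$ is an achronal $C^0$ hypersurface, provided it is nonempty; I would check this so the statement is not vacuous. $JF(L)$ is nonempty by hypothesis. It is also not all of $\mathcal{M}$: if it were, we would have $J^+(p_1)=\mathcal{M}$, so $I^-(p_1)\subset J^+(p_1)$. Then some $x$ satisfies $p_1\preceq x\ll p_1$, and push-up gives $p_1\ll p_1$, contradicting causality. Since $\mathcal{M}$ is connected and $JF(L)$ is a nonempty proper subset, its boundary is nonempty.

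\textbf{Step 3: the second claim.} Let $q\in\partial JF(L)$. Because $JF(L)$ is closed, $q\in JF(L)$, so $q\in J^+(p)$ for every $p\in L$. Suppose, for contradiction, that $q\notin\partial J^+(p)$ for all $p\in L$. Using $\partial J^+(p)=J^+(p)\setminus I^+(p)$, this means $q\in I^+(p)$ for every $p\in L$. Then $\bigcap_{p\in L}I^+(p)$ is a finite intersection of open sets, hence open, and it contains $q$ and lies inside $JF(L)$. So $q\in\operatorname{int}JF(L)$, contradicting $q\in\partial JF(L)$. Hence some $p\in L$ has $q\in\partial J^+(p)$.

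The only point needing care is the application of (iv) in Step 2, that is, confirming that $JF(L)$ genuinely has a nonempty boundary and that $\partial A$ for a future set is a hypersurface, with closure not needed for that part. I expect this to be routine. Finiteness of $L$ enters essentially only in Step 3, through the openness of a finite intersection.
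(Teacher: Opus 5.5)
Your proof is correct and follows essentially the same route as the paper. Both arguments observe that $JF(L)$ is a future set, so item (iv) gives the hypersurface claim, and both use closedness of each $J^+(p)$ together with openness of the finite intersection $\bigcap_{p\in L}I^+(p)$ to place each boundary point on some $\partial J^+(p)$; the paper packages this as the identity $\partial JF(L)=\bigcap_{p\in L}J^+(p)\setminus\bigcap_{p\in L}I^+(p)$. Your additional check that $JF(L)\neq\mathcal{M}$, via push-up and causality so that the boundary is nonempty, is a small refinement the paper omits.
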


\begin{proof}
 Note that $JF(L)$ is a future set, i.e., $I^+\big(JF(L)\big) \subset JF(L)$. Therefore $\partial JF(L)$ is  a $C^0$ hypersurface. Moreover, 
\[
\partial JF(L) = \overline{JF(L)} \setminus \text{int}\big(JF(L) \big) = \overline{\bigcap_{p \in L} J^+(p)} \setminus \text{int} \left(\bigcap_{p \in L} J^+(p) \right) = \bigcap_{p \in L} J^+(p) \setminus \bigcap_{p \in L} I^+(p).
\]
In the last equality, we used closedness of $J^+(p)$ and the fact that the interior of a finite intersection is the intersection of the interiors. Thus, for each $q \in \partial JF(L)$, there is some $p \in L$ such that $q \in J^+(p) \setminus I^+(p) = \partial J^+(p)$. 
\end{proof}

We immediately get the following corollary.

\begin{Corollary}\label{prp:boundary}
    In the setting of Proposition \ref{prp:boundary2}, let $q$ be an arbitrary point in $\partial JF(L)$. Then for every minimal subset  $L'\subset L$ there exists an element $p\in L'$ such that $q\in \partial J^+(p)$.
\end{Corollary}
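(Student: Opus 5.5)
The plan is to apply Proposition \ref{prp:boundary2} directly to the minimal subset $L'$ rather than to $L$. Fix a minimal subset $L' \subset L$ and a point $q \in \partial JF(L)$. By the definition of a minimal subset we have $JF(L') = JF(L)$. In particular $JF(L')$ is nonempty, since $q \in \partial JF(L) \subset \overline{JF(L)} = JF(L)$ by closedness, and so $\partial JF(L') = \partial JF(L)$. We also note that $L'$ is finite and nonempty, because $L$ is finite and minimal subsets are nonempty by definition.

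Next I invoke Proposition \ref{prp:boundary2} with $L'$ in place of $L$. Its hypotheses hold: the spacetime is causally simple, and $L'$ is a finite nonempty set with nonempty joint future. The second assertion of that proposition then produces some $p \in L'$ with $q \in \partial J^+(p)$, where $\partial J^+(p) = J^+(p)\setminus I^+(p)$ by causal simplicity. This is exactly the claim.

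If one prefers to avoid re-invoking the proposition, the same conclusion follows from the identity in its proof: $\partial JF(L') = \bigcap_{p\in L'} J^+(p) \setminus \bigcap_{p\in L'} I^+(p)$. A point $q$ in this set lies in every $J^+(p)$ with $p \in L'$, but fails to lie in some $I^+(p)$, and for that $p$ we get $q \in \partial J^+(p)$.

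There is no real obstacle. The only point needing care is that the boundaries of $JF(L)$ and $JF(L')$ agree, which holds because the two sets themselves are equal. Finiteness of $L'$ matters so that the interior of the intersection equals the intersection of the interiors.
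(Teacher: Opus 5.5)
Your proposal is correct and is essentially the paper's proof: since $JF(L')=JF(L)$, Proposition~\ref{prp:boundary2} applies to the finite nonempty set $L'$ and gives the required $p\in L'$ with $q\in\partial J^+(p)$. The extra checks you spell out, namely that $JF(L')$ is nonempty and that $\partial JF(L')=\partial JF(L)$, are the details the paper's one-line proof leaves implicit.
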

\begin{proof}
    Since $JF(L')=JF(L)$ Proposition \ref{prp:boundary2} applies. 
\end{proof}

Eventually we want to apply Proposition \ref{prp:open2} to identify the points in $\operatorname{span}(L)$ from the joint future $JF(L)$. This requires us to show a stronger result than obtained in the above corollary. In particular, we need to show that for every point in $p\in \operatorname{span}(L)$ the set $\partial JF(L)\cap\partial J^+(p)$ contains an open subset of $\partial J^+(p)$. 
To strengthen Corollary \ref{prp:boundary} accordingly, we need to add an additional assumption given by the following definition.

\begin{Def}\label{def:futureconnected}
    A spacetime $(\mathcal{M}, g)$ is called \emph{future cohesive} if $JF(S)$ is connected for all subsets $S \subset \mathcal{M}$.  
\end{Def}

Definition \ref{def:futureconnected} is stated for all subsets $S \subset \mathcal{M}$, but only the case of finite $S$ is used in what follows; we have kept the stronger form since it is no harder to verify in the examples of interest (see Proposition \ref{prp: TIP connected}).

\begin{Prp}\label{prp: int connected}
Let $(\mathcal{M},g)$ be a spacetime. If $JF(S)$ is connected for some subset $S \subset \mathcal{M}$, then the interior of $JF(S)$ is connected.
\end{Prp}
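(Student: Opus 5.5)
The plan is to show that $\operatorname{int} JF(S)$ is dense in $JF(S)$, and more precisely that every point of $JF(S)$ can be joined to $\operatorname{int} JF(S)$ by a path whose points, apart from the starting point, all lie in the interior. Connectedness of the interior then follows from a standard point-set argument. The key fact used throughout is that $JF(S)$ is a future set: if $x\in JF(S)$ and $x\ll y$, then for every $p\in S$ we have $p\preceq x\ll y$, so push-up gives $p\ll y$. Hence $I^+(x)\subset \bigcap_{p\in S} I^+(p)$, which is open because each $I^+(p)$ is open. (For infinite $S$ one still has $I^+(x)\subset JF(S)$ with $I^+(x)$ open, so $I^+(x)\subset \operatorname{int} JF(S)$.) We may assume $JF(S)\neq\emptyset$, since the empty set is connected.

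\textbf{Step 1.} For every $x\in JF(S)$, choose a short future-directed timelike curve $c\colon[0,1]\to\mathcal{M}$ with $c(0)=x$; this is possible because the time orientation provides a timelike vector field. Then $c((0,1])\subset I^+(x)\subset \operatorname{int}JF(S)$. Consequently $x\in \overline{\operatorname{int}JF(S)}$, and $x$ is attached to the interior by the path $c$.

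\textbf{Step 2.} Suppose, for contradiction, that $\operatorname{int}JF(S)=A\cup B$ with $A,B$ disjoint, nonempty and open in $\mathcal{M}$. Define
\[
A^\ast=\{x\in JF(S)\mid I^+(x)\cap A\neq\emptyset\}\quad\text{and}\quad B^\ast=\{x\in JF(S)\mid I^+(x)\cap B\neq\emptyset\}.
\]
We need $I^+(x)$ to be connected, and in fact to lie entirely in $A$ or entirely in $B$. For this, observe that $I^+(x)$ is path connected: any $y\in I^+(x)$ is joined to $x$ by a timelike curve, and all points on that curve other than $x$ lie in $I^+(x)$. So $I^+(x)$ is connected (in fact it is the union of the punctured curves, each of which meets every small neighborhood of $x$). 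Since $I^+(x)\subset A\cup B$, it lies entirely in one of them. Thus $A^\ast$ and $B^\ast$ are disjoint and cover $JF(S)$.

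\textbf{Step 3.} It remains to show that $A^\ast$ and $B^\ast$ are relatively open in $JF(S)$. Take $x\in A^\ast$ and $y\in I^+(x)\cap A$. The set $I^-(y)$ is an open neighborhood of $x$. For any $x'\in JF(S)\cap I^-(y)$ we have $y\in I^+(x')\cap A$, so $x'\in A^\ast$. The same argument works for $B^\ast$. Both sets are nonempty, since $A\subset A^\ast$ (interior points $z$ satisfy $I^+(z)\ni$ points near $z$ lying in $A$, because $A$ is open) and likewise $B\subset B^\ast$. This contradicts the connectedness of $JF(S)$.

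\textbf{Main obstacle.} I expect the delicate step to be the connectedness of $I^+(x)$ used in Step 2. I handle it by noting that $I^+(x)$ is path connected through $x$'s timelike curves: two points $y_1,y_2\in I^+(x)$ can be joined by going back along their timelike curves toward $x$ and across near $x$ within $I^+(x)$, using a small convex neighborhood in which $I^+(x)$ looks like the Minkowski cone. Even a simpler route suffices: each punctured curve is connected and all of them accumulate at $x$, so every point of $I^+(x)$ near $x$ lies in a single component. For the convex neighborhood, the local cone $I^+(x,U)$ is connected and each curve enters it, which yields connectedness.
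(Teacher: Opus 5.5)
Your argument is correct and is essentially the paper's proof: the paper also splits $\operatorname{int}JF(S)$ into disjoint open sets $A_1,A_2$, sorts each $x\in JF(S)$ by which piece contains the path-connected set $I^+(x)\subset\operatorname{int}JF(S)$, and gets relative openness from the neighborhood $I^-(y)\cap JF(S)$ with $y\in I^+(x)\cap A_1$. Your ``meets'' definition makes that openness step immediate. The paper simply asserts that $I^+(x)$ is path-connected, and of your justifications only the convex-neighborhood one is valid: punctured curves that merely accumulate at $x$, a point that need not lie in $I^+(x)$, could a priori lie in different components.
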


\begin{proof}
    We prove the contrapositive.
    Assume $\text{int}\,JF(S)$ is disconnected by two disjoint nonempty open sets $A_1$ and $A_2$. Define $B_i := \{x \in JF(S) \mid I^+(x) \subset A_i\}$ for $i = 1,2.$ We show that $B_1$ and $B_2$ are two nonempty disjoint sets which are open in the subspace topology of $JF(S)$ and whose union equals $JF(S)$---contradicting connectedness of $JF(S)$.

    $B_1$ is nonempty since $A_1 \subset B_1$. Indeed, if $x \in A_1$, then, since $JF(S)$ is a future set and $I^+(x)$ is open, we have
    \[
    I^+(x) \subset \text{int}\,JF(S) = A_1 \sqcup A_2.
    \]
    Moreover, $I^+(x) \cap A_1 \neq \emptyset$ because $A_1$ is an open set containing $x$. Since $I^+(x)$ is path-connected, it follows that $I^+(x) \subset A_1$.  Clearly $B_2$ is also nonempty.
    
    $B_1$ is open in $JF(S)$. Indeed, fix \(x\in B_1\) and choose \(y\in I^+(x)\). Then \(y\in A_1\). We claim that
    \[I^-(y)\cap JF(S)\subset B_1.
    \]
    Indeed, if $z\in I^-(y)\cap JF(S)$, then $y\in I^+(z)\cap A_1$. Since $JF(S)$ is a future set,
    \[
    I^+(z)\subset   \operatorname{int}JF(S)=A_1\sqcup A_2. \]
    As $I^+(z)$ is path-connected and meets $A_1$ at $y$, it follows that $I^+(z)\subset A_1$, and hence $z\in B_1$. Thus $I^-(y)\cap JF(S)$ is an open neighborhood of $x$ in $JF(S)$ contained in $B_1$. Clearly $B_2$ is also open in $JF(S)$.

    Now we show $JF(S) = B_1 \sqcup B_2$. Clearly $B_1 \cup B_2 \subset JF(S)$. To show the reverse inclusion, note that $x \in JF(S)$ implies that $I^+(x) \subset \text{int}\, JF(S) = A_1 \sqcup A_2$. Because $I^+(x)$ is path-connected, either $I^+(x) \subset A_1$ or $I^+(x) \subset A_2$. Thus $B_1 \cup B_2 = JF(S)$. Lastly, the fact that $B_1$ and $B_2$ are disjoint follows from $A_1$ and $A_2$ being disjoint. Indeed, if there was a point $x \in B_1 \cap B_2$, then $I^+(x) \subset A_1 \cap A_2 = \emptyset$, contradicting the fact that $I^+(x) \neq \emptyset$.
\end{proof}

Future cohesiveness plays a critical role in the following lemma which states that for any point $\tilde{p}$ in the span, there is a point $q$ in the boundary of the joint future which is not in the boundary of the causal future of any other point in the same minimal subset as $\tilde{p}$. Example \ref{ex:counterexample} shows that future cohesiveness is a necessary assumption for this conclusion.

\begin{Lemma}\label{lem:onlyspanboundary}
Let $(\mathcal{M},g)$ be a causally simple and future cohesive spacetime. Let $L\subset\mathcal{M}$ be a finite nonempty subset. Then for all $\tilde p\in\operatorname{span}(L) $ and $L' \subset L$ a minimal subset  containing $\tilde p$, there exists a point $q\in \partial JF(L)\cap\partial J^+(\tilde p)$ such that $q\notin \partial J^+(p)$  for all $p\in L'\backslash \tilde{p} $.
\end{Lemma}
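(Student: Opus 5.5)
Throughout, fix $\tilde p\in\operatorname{span}(L)$ and a minimal subset $L'\subset L$ containing $\tilde p$. Since $\operatorname{span}(L)\neq\emptyset$, the convention in Definition \ref{def:spanning_points} gives $JF(L)\neq\emptyset$. Set $K:=L'\setminus\{\tilde p\}$.

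The plan is to find $q$ by moving along a path that lies inside the interior of the joint future of $K$, and stopping exactly where the path first enters $J^+(\tilde p)$. Staying in $\operatorname{int} JF(K)=\bigcap_{p\in K}I^+(p)$ guarantees $q\notin\partial J^+(p)$ for every $p\in K$. Hitting $J^+(\tilde p)$ for the first time guarantees $q\in\partial J^+(\tilde p)$.

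\emph{Step 1: $JF(L')\subsetneq JF(K)$.} If $K\neq\emptyset$, this is exactly minimality of $L'$, since $JF(K)\neq JF(L)=JF(L')$. If $K=\emptyset$, then $JF(K)=\mathcal{M}$ and $JF(L')=J^+(\tilde p)$. This is a proper subset: any $r\in I^-(\tilde p)$ lies outside $J^+(\tilde p)$, because otherwise we would get a closed causal curve, contradicting causality. So in either case we may pick $x\in JF(K)\setminus J^+(\tilde p)$.

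\emph{Step 2: move $x$ into $\operatorname{int}JF(K)$ while staying outside $J^+(\tilde p)$.} Since $J^+(\tilde p)$ is closed by causal simplicity, some neighbourhood of $x$ misses $J^+(\tilde p)$. Choose $x'\in I^+(x)$ in that neighbourhood. Since $JF(K)$ is a future set, $x'\in I^+(JF(K))\subset \operatorname{int}JF(K)$, and $x'\notin J^+(\tilde p)$.

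\emph{Step 3: pick a target point inside $JF(L')$.} Since $JF(L')=JF(L)\neq\emptyset$, there is a point $z\in JF(L')$. Let $y\in I^+(z)$; then $y\in\operatorname{int}JF(L')\subset\operatorname{int}JF(K)$.

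\emph{Step 4: connect $x'$ to $y$.} By future cohesiveness $JF(K)$ is connected (trivially so if $K=\emptyset$). By Proposition \ref{prp: int connected}, $\operatorname{int}JF(K)$ is then connected. Being an open connected subset of a manifold, it is path-connected. Take a continuous path $c\colon[0,1]\to\operatorname{int}JF(K)$ with $c(0)=x'$ and $c(1)=y$.

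\emph{Step 5: locate $q$.} The set $T:=\{t: c(t)\in J^+(\tilde p)\}$ is closed, contains $1$, and does not contain $0$. Let $t^*:=\min T>0$ and $q:=c(t^*)$.
\begin{itemize}
\item $q\in J^+(\tilde p)$, and $q$ is a limit of the points $c(t)\notin J^+(\tilde p)$ for $t<t^*$. Hence $q\notin I^+(\tilde p)=\operatorname{int}J^+(\tilde p)$, so $q\in\partial J^+(\tilde p)$.
\item Since $q\in\operatorname{int}JF(K)=\bigcap_{p\in K}I^+(p)$ and $q\in J^+(\tilde p)$, we get $q\in JF(L')=JF(L)$.
\item Since $q\notin I^+(\tilde p)$, we have $q\notin\bigcap_{p\in L'}I^+(p)=\operatorname{int}JF(L)$, so $q\in\partial JF(L)$.
\item For each $p\in K$, $q\in I^+(p)$, so $q\notin J^+(p)\setminus I^+(p)=\partial J^+(p)$.
\end{itemize}
Thus $q$ has all the properties required in Lemma \ref{lem:onlyspanboundary}.

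The main obstacle is Step 4: it is the only place where future cohesiveness enters, via Proposition \ref{prp: int connected}. Without it, as in Example \ref{ex:counterexample}, $x'$ and $y$ can sit in different components of $\operatorname{int}JF(K)$, and no such path exists. The remaining care is in Step 5: the first-entry point must be taken inside $\operatorname{int}JF(K)$, so that it is interior for every $p\in K$ while still lying on the boundary of $J^+(\tilde p)$.
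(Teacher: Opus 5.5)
Your proof is correct and follows essentially the same route as the paper: you run a path inside $\operatorname{int}JF(L'\setminus\{\tilde p\})$, which is connected by future cohesiveness and Proposition~\ref{prp: int connected}, between a point outside $J^+(\tilde p)$ and a point of $\operatorname{int}JF(L')$, and take $q$ at the transition into $J^+(\tilde p)$. The differences are cosmetic: you use a first-entry time where the paper uses the supremum of times spent in $A\cap B$; you get a point of $\operatorname{int}(B)\setminus A$ by pushing slightly to the future rather than via $B=\overline{\operatorname{int}(B)}$; and you handle the singleton case uniformly instead of separately.
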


\begin{proof}
If $JF(L) = \emptyset$, then $\text{span}(L) = \emptyset$ and the statement holds vacuously. 

Assume $JF(L) \neq \emptyset$. Then $JF(L') \neq \emptyset$. The singleton case $L' = \{\tilde p\}$ is trivial so in what follows we assume $L'$ contains at least two points. For notational simplicity, set $A: = J^+(\tilde p)$ and $B := JF(L' \setminus \tilde p)$. Since $L'$ is minimal, we have $JF(L') = A \cap B \subsetneq B$. Also $A = \overline{\text{int} (A)}$ and $B = \overline{\text{int}(B)}$. To prove this, we show $B \subset \overline{\text{int}(B)}$; the reverse inclusion is obvious, and the proof for $A$ is similar. Fix $x \in B$. Then $p \preceq x$ for all $p \in L' \setminus \tilde p$. Let $U$ be a neighborhood of $x$ and fix $y \in I^+(x) \cap U$. Then $p \preceq x \ll y$ which implies $p \ll y$ by the push-up property. Hence $y \in \bigcap_{p \in L' \setminus \tilde p} I^+(p) = \text{int}(B)$. Thus every neighborhood of \(x\) meets \(\operatorname{int}(B)\), and therefore \(x\in\overline{\operatorname{int}(B)}\).

%Therefore every neighborhood $U$ of $x$ intersects $I^+(p)$ for all $p \in L' \setminus \tilde{p}$. Since $\text{int}(B) = \bigcap_{p \in L'\setminus \tilde p}I^+(p)$, we have $x \in \overline{\text{int}(B)}$. 

First, note that $\text{int}(B) \setminus A \neq \emptyset$. Indeed, if it were empty, then $\text{int}(B) \subset A$. Taking the closure gives  $\overline{\text{int}(B)} \subset \overline{A}$, hence $B \subset A$ which contradicts $A \cap B \subsetneq B$. Second, $\text{int}(A \cap B) \neq \emptyset$. Indeed, choose $x \in A\cap B$ and $y \in I^+(x)$. For every $p \in L'$, $p \preceq x \ll y$. Thus $y \in \bigcap_{p \in L'} I^+(p) = \text{int}(A \cap B)$.

Fix $x \in \text{int}(B) \setminus A$ and $y \in \text{int}(A \cap B)$. Since $\text{int}(B)$ is a connected open submanifold by Proposition \ref{prp: int connected}, there is a path $c \colon [0,1] \to \text{int}(B)$ connecting $y = c(0)$ to $x = c(1)$. Define $t_0 := \sup \{t \in [0,1] \mid c\big([0,t]\big) \subset A \cap B\}$. Set $q := c(t_0) \in \partial(A \cap B)$. Since $\partial (A \cap B) \subset (\partial A \cap B) \cup (A \cap \partial B)$ and $q \in \text{int}(B)$, it follows that $q \in \partial A$. Therefore $q \in \partial (A \cap B) \cap \partial A$ and $q \notin \partial B$.

Seeking a contradiction, suppose the conclusion of the lemma is false. Then there is some $p \in L' \setminus \tilde{p}$ such that $q \in \partial J^+(p)$. Therefore any neighborhood $U$ of $q$ contains points that are outside of $J^+(p)$ and hence outside of $B$. Thus $U$ contains points in $B$ (namely $q$) and points outside of $B$, hence $q \in \partial B$, which contradicts the previous paragraph. 
\end{proof}

\begin{Remark}
In fact, the point $q$ obtained in Lemma \ref{lem:onlyspanboundary} is not $\tilde p$. Indeed, if $q = \tilde p$, then $p \ll \tilde p$ for all $p \in L' \setminus \tilde p$  since $q \in {\rm int}(B) = \bigcap_{p \in L'\setminus \tilde p}I^+(p)$, but this would contradict Lemma \ref{lem:spacelike}. 
\end{Remark}

We are now ready to prove the main result of this paper.
\begin{Thm}[Conicality of Causally Simple Future Cohesive Spacetimes]\label{thm: main}
  Let  $(\mathcal{M},g)$ be a causally simple future cohesive spacetime  of dimension $1+N$ with $N\geq 2$.  Then $(\mathcal{M},g)$ is conical according to Definition \ref{def:conicality}.
\end{Thm}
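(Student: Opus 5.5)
The approach is to characterize $\operatorname{span}(L)$ intrinsically in terms of the set $JF(L)$. The direction ``$\Longrightarrow$'' is Proposition \ref{prp: trivial conical implication}, so suppose $JF(L_i)=JF(L_j)$. If this set is empty, both spans are empty by convention and we are done. So assume $JF(L_i)=JF(L_j)\neq\emptyset$. By symmetry it suffices to show $\operatorname{span}(L_i)\subset\operatorname{span}(L_j)$. In fact I will show the stronger statement that every $\tilde p\in\operatorname{span}(L_i)$ belongs to \emph{every} minimal subset of $L_j$. Since $L_j$ is finite with nonempty joint future, at least one minimal subset exists, so $\tilde p\in\operatorname{span}(L_j)$.

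\textbf{Step 1: a patch of $\partial JF$ lying only on $\partial J^+(\tilde p)$.} Fix $\tilde p\in\operatorname{span}(L_i)$ and a minimal $L'\subset L_i$ containing it. Lemma \ref{lem:onlyspanboundary} gives $q\in\partial JF(L_i)\cap\partial J^+(\tilde p)$ with $q\notin\partial J^+(p)$ for all $p\in L'\setminus\tilde p$. Since $q\in JF(L')\subset J^+(p)$ and, by causal simplicity, $\partial J^+(p)=J^+(p)\setminus I^+(p)$, we get $q\in I^+(p)$ for each such $p$. Because $L'$ is finite, there is an open neighbourhood $V$ of $q$ with $V\subset\bigcap_{p\in L'\setminus\tilde p}I^+(p)$. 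On $V$ we then have $JF(L_i)\cap V=JF(L')\cap V=J^+(\tilde p)\cap V$. Hence
\[
W:=\partial JF(L_i)\cap V=\partial J^+(\tilde p)\cap V,
\]
which is a nonempty set open in both $\partial JF(L_i)$ and $\partial J^+(\tilde p)$.

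\textbf{Step 2: matching with a point of $L_j$.} Let $L''$ be any minimal subset of $L_j$. Since $JF(L'')=JF(L_j)=JF(L_i)$, Proposition \ref{prp:boundary2} (equivalently Corollary \ref{prp:boundary}) gives
\[
W=\bigcup_{r\in L''}\big(W\cap\partial J^+(r)\big).
\]
This is a finite union of sets that are closed in $W$. The set $W$ is an open subset of the $N$-dimensional topological manifold $\partial J^+(\tilde p)$, hence a Baire space. So some $W\cap\partial J^+(r)$ contains a nonempty set $U$ that is open in $W$, and therefore open in $\partial J^+(\tilde p)$.

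\textbf{Step 3: the expected main obstacle.} To invoke Proposition \ref{prp:open2} I also need $U$ to be open in $\partial J^+(r)$. I would get this from invariance of domain. Both $\partial J^+(r)$ and $\partial J^+(\tilde p)$ are $C^0$ hypersurfaces, i.e.\ topological $N$-manifolds (by the future-set property). The set $U$ is homeomorphic to an open subset of $\mathbb{R}^N$ and sits inside the $N$-manifold $\partial J^+(r)$. Invariance of domain, applied in charts of $\partial J^+(r)$, then shows $U$ is open in $\partial J^+(r)$. Proposition \ref{prp:open2}, which uses $N\geq2$, now yields $r=\tilde p$.

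\textbf{Conclusion.} Since $L''$ was an arbitrary minimal subset of $L_j$, the point $\tilde p$ lies in every minimal subset of $L_j$, and in particular $\tilde p\in\operatorname{span}(L_j)$. Exchanging the roles of $L_i$ and $L_j$ gives $\operatorname{span}(L_i)=\operatorname{span}(L_j)$.
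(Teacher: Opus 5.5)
Your proposal is correct and follows essentially the same route as the paper's proof. Lemma \ref{lem:onlyspanboundary} gives a neighbourhood $V$ on which $\partial JF$ agrees with $\partial J^+(\tilde p)$; writing $W$ as a finite union of relatively closed pieces over an arbitrary minimal subset of $L_j$ (the paper also takes this subset arbitrary, so your stronger ``every minimal subset'' conclusion is implicit there) gives a patch $U$; invariance of domain makes $U$ open in $\partial J^+(r)$; and Proposition \ref{prp:open2} then forces $r=\tilde p$. One small point in Step 3: $U$ itself need not be homeomorphic to an open subset of $\mathbb{R}^N$, so apply invariance of domain to a coordinate ball of $\partial J^+(\tilde p)$ inside $U$, or use the manifold version as the paper does.
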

\begin{proof}
One implication was shown in Proposition \ref{prp: trivial conical implication}. Now we show the reverse implication.
Assume $JF(L_1) = JF(L_2) =: J$ for two finite nonempty subsets $L_1, L_2 \subset \mathcal{M}$. We will show $\text{span}(L_1) = \text{span}(L_2)$. First, if $J = \emptyset$, then $\text{span}(L_1) = \text{span}(L_2) = \emptyset$ and the statement holds vacuously. In what follows, we assume $J \neq \emptyset$.  

Fix $p_1 \in \text{span}(L_1)$ and a minimal subset $L_1' \subset L_1$ containing $p_1$. By Lemma \ref{lem:onlyspanboundary}, there is a $q \in \partial J \cap \partial J^+(p_1)$ such that $q \notin \partial J^+(p)$ for all $p \in L'_1 \setminus  p_1$. Since $L_1'$ is finite and the boundaries are closed, there is a neighborhood $V$ of $q$ in the spacetime $\mathcal{M}$ such that $V \cap \partial J^+(p) = \emptyset$ for all $p \in L_1' \setminus p_1$. Moreover, since $q \in I^+(p)$ for all $p \in L_1' \setminus p_1$, we can assume that $V \subset \bigcap_{p \in L_1'\setminus p_1}I^+(p)$. We claim that $V \cap \partial J = V \cap \partial J^+(p_1)$. Indeed, the inclusion, $V \cap \partial J \subset V \cap \partial J^+(p_1)$, follows from the given choice of $V$ and the fact that $\partial J \subset \bigcup_{p \in L_1'} \partial J^+(p)$. To show the reverse inclusion, fix $x \in V \cap \partial J^+(p_1)$. Then $x \in J^+(p)$ for all $p \in L_1'$ and $x \notin I^+(p_1)$ so $x \notin \bigcap_{p \in L_1'}I^+(p) =  \text{int}(J)$, hence $x \in \partial J$. 

Set $W := V \cap \partial J = V \cap \partial J^+(p_1)$ and consider a minimal subset $L_2' \subset L_2$. Since $\partial J \subset \bigcup_{p \in L_2'}\partial J^+(p)$, we have
\[
W = \bigcup_{p \in L_2'}\big(W \cap \partial J^+(p)\big).
\]
Set $C_p := W \cap \partial J^+(p)$. Each $C_p$ is a closed set relative to the subspace topology of $W$. Since $L_2'$ is finite, the topological space $W$ is a finite union of closed sets. Consequently, there must be some $p_2 \in L_2'$ such that $C_{p_2}$ has nonempty interior with respect to the subspace topology of $W$. (A nonempty topological space cannot be a finite union of closed sets each with empty interior.) Therefore, there is a nonempty open set $U$, in the subspace topology of $W$, such that $U \subset \partial J^+(p_2)$. It's clear that $U$ is an open set in $\partial J^+(p_1)$. We show that $U$ is also an open set in $\partial J^+(p_2)$. Indeed, since $U$ is an open set in the topological $N$-manifold $\partial J^+(p_1)$, $U$ itself is a topological $N$-manifold. The inclusion map $U \hookrightarrow \partial J^+(p_2)$ is trivially continuous and injective, so $U$ is also open in $\partial J^+(p_2)$ by Brouwer's theorem on invariance of domain.    

Thus $U$ is open in both $\partial J^+(p_1)$ and  $\partial J^+(p_2)$; hence $p_1 = p_2$ by Proposition \ref{prp:open2}. Therefore $p_1 \in \text{span}(L_2)$. The reverse inclusion, $\text{span}(L_2) \subset \text{span}(L_1)$, follows by a symmetrical argument. 
\end{proof}

\begin{Remark}
It is worth noting where we used finiteness of $L$ in the proof of the main theorem. First, we used it in Proposition \ref{prp:boundary2}. Second, we used it in the construction of the neighborhood $V$ in the proof of Theorem \ref{thm: main}. Third, we used it to show that $C_{p_2}$ has nonempty interior. 
\end{Remark}

\begin{Remark}
In the following we summarize a ``practical'' algorithm for recovering $\operatorname{span}(L)$ given a nonempty $JF(L)$
%\footnote{If $JF(L)=\emptyset$ we have $\operatorname{span}(L)=\emptyset$ and the procedure is trivial. }
in a causally simple and future cohesive spacetime. 
\begin{enumerate}
    \item As a first step we identify all null geodesics $\gamma$ in $\mathcal{M}$ such that there exists an open interval $(s_1,s_2)$ with $\gamma(s)\in\partial JF(L)$ for all $s\in (s_1, s_2)$. We call the set of all such null geodesics $\Gamma(JF(L))$.
    \item Determine the maximal extension of all null geodesics in the set $\Gamma(JF(L))$. Let $\gamma:(a,b)\rightarrow\mathcal{M}$ be the maximal extension of an element in $\Gamma(JF(L))$ then we define $s_{\text{min}}(\gamma):=\min\{s\in(a,b)\;|\; \gamma(s)\in \partial JF(L)\}$.
    \item We then have that the span of $L$ is the set of points where maximal extensions of elements in $\Gamma$ first intersect the maximal extension of another element in $\Gamma$ in the past of $JF(L)$
    \begin{equation*}
        \operatorname{span}(L):= \bigcup_{\gamma\in\Gamma} \Big\{\gamma(s_{\text{max}})\Big| s_{\text{max}}:=\max\{ s\in(a,s_{\text{min}})| \exists \tilde{\gamma}\in \Gamma \text{ such that } \gamma(s)\in \tilde{\gamma}\}\Big\}.
    \end{equation*}
\end{enumerate}
\end{Remark}

The next proposition shows that the past of any observer, i.e. $I^-(\gamma)$ for an future-inextendible timelike curve, which is the natural domain for causal modeling, is future cohesive. Sets of the form $I^-(\gamma)$ are often referred to as TIPs~\cite{hawking2023large}.

\begin{Prp}\label{prp: TIP connected}
Let $(\mathcal{M}, g)$ be a spacetime. Let $\gamma$ be a future-directed future-inextendible timelike curve. Then $I^-(\gamma)$, as a spacetime, is future cohesive.
\end{Prp}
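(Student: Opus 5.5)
The plan is to show directly that for every $S \subset \mathcal{M}' := I^-(\gamma)$ the joint future $JF(S)$, computed with the causal relation of $\mathcal{M}'$, is path-connected. The idea is that every point of $JF(S)$ can be joined inside $JF(S)$ to the curve $\gamma$, and that the tail of $\gamma$ lies in $JF(S)$. The only point requiring care is that all relations are taken \emph{relative to} $\mathcal{M}'$: every curve I use must stay inside $\mathcal{M}'$.

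First I check that $\mathcal{M}'$ is a spacetime and contains $\gamma$. It is open, being a union of open sets $I^-(\gamma(t))$. For $s<s'$ we have $\gamma(s) \ll \gamma(s')$, so $\gamma(s) \in \mathcal{M}'$. If $z \ll \gamma(t)$ via a timelike curve $\sigma$, then every point of $\sigma$ lies in $I^-(\gamma(t)) \subset \mathcal{M}'$, so $\sigma$ is contained in $\mathcal{M}'$. Hence $z \ll \gamma(t)$ also holds relative to $\mathcal{M}'$. In particular, each $I^-(\gamma(t))$ is path-connected to $\gamma$ within $\mathcal{M}'$, so $\mathcal{M}'$ is connected. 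This already settles $S=\emptyset$, where $JF(\emptyset)=\mathcal{M}'$, and the case $JF(S)=\emptyset$ is trivial.

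Now fix $x \in JF(S)$, so $p \preceq x$ in $\mathcal{M}'$ for all $p \in S$. Choose $t$ with $x \ll \gamma(t)$ and a timelike curve $\sigma$ from $x$ to $\gamma(t)$; as above, $\sigma \subset \mathcal{M}'$. For any point $y$ on $\sigma$, concatenate a causal curve in $\mathcal{M}'$ from $p$ to $x$ with the segment of $\sigma$ from $x$ to $y$. This is a piecewise smooth future-directed causal curve in $\mathcal{M}'$, so $p \preceq y$ in $\mathcal{M}'$ for every $p \in S$, i.e. $y \in JF(S)$. Thus $\sigma \subset JF(S)$. The same concatenation argument, now continued along $\gamma$, shows $\gamma(s) \in JF(S)$ for all $s \ge t$.

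Given two points $x, x' \in JF(S)$ with associated parameters $t, t'$, let $T=\max(t,t')$. I join $x$ to $\gamma(t)$ along $\sigma$, then $\gamma(t)$ to $\gamma(T)$ along $\gamma$, and do the same from $x'$. This gives a path from $x$ to $x'$ inside $JF(S)$. So $JF(S)$ is path-connected, hence connected, and $\mathcal{M}'$ is future cohesive.

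The only real obstacle is the bookkeeping of relative causality: one must make sure that $J^+$ and $\ll$ refer to curves in $\mathcal{M}'$. This is handled by the observation that timelike curves ending on $\gamma$, as well as $\gamma$ itself, never leave $I^-(\gamma)$.
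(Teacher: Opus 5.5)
Your proof is correct and takes essentially the same route as the paper's: join each point of $JF(S)$ by a timelike curve to $\gamma$, then run along $\gamma$ to a common later parameter. These connecting curves stay in $JF(S)$ by concatenation. You are more explicit than the paper in checking that every curve stays inside $I^-(\gamma)$, so that the relative causal relations are the right ones, and in showing why the curves lie in $JF(S)$; the paper compresses both points into ``by construction''.
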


\begin{proof}
Set $\mathcal{M}' := I^-(\gamma)$. Fix a subset $S \subset \mathcal{M}'$. The joint future of $S$ within $\mathcal{M}'$ is  $JF(S, \mathcal{M}') = \bigcap_{p \in S} J^+(p, \mathcal{M}')$. Fix $x,y \in JF(S,\mathcal{M}')$. There are points $a,b$ on the image of $\gamma$ such that $x \ll a$ and $y \ll b$. Without loss of generality, we can assume $a \ll b$ (i.e., $a$ precedes $b$ on $\gamma$). Then there are timelike curves $\alpha$ and $\beta$ from $x$ to $a$ and $y$ to $b$, respectively. Let $\lambda$ denote the portion of $\gamma$ from $a$ to $b$. Then there is a continuous curve from $x$ to $y$ given by the concatenation of $\alpha$, $\lambda$, and $-\beta$.  By construction, $\alpha$, $\beta$, and $\gamma$ all belong to $JF(S,\mathcal{M}')$. Thus $JF(S, \mathcal{M}')$ is path connected. 
\end{proof}

\begin{Corollary}\label{cor: tip}
    Let $(\mathcal{M},g)$ be a causally simple spacetime of dimension $\geq 3$. 
    Let $\gamma$ be a future-directed future-inextendible timelike curve. Then $I^-(\gamma)$, as a spacetime, is conical. 
\end{Corollary}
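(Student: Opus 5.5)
The plan is to apply Theorem \ref{thm: main} to the spacetime $\mathcal{M}' := I^-(\gamma)$, equipped with the restricted metric. Three hypotheses must be checked for $\mathcal{M}'$: that it is a spacetime of dimension $1+N$ with $N \geq 2$, that it is future cohesive, and that it is causally simple.

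The first two are immediate. $\mathcal{M}'$ is an open subset of $\mathcal{M}$, hence a time-oriented Lorentzian manifold of the same dimension, and we may take $\dim \geq 3$. It is connected, since every point of $\mathcal{M}'$ is joined to $\gamma$ by a timelike curve lying in $\mathcal{M}'$, and the image of $\gamma$ is connected. Future cohesiveness is exactly Proposition \ref{prp: TIP connected}.

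The substantive step is causal simplicity of $\mathcal{M}'$. Causality is inherited from $\mathcal{M}$, so it remains to show that $J^\pm(p,\mathcal{M}')$ is closed in $\mathcal{M}'$ for each $p \in \mathcal{M}'$.

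I would first show that $\mathcal{M}'$ is causally convex in $\mathcal{M}$. Since $\mathcal{M}'$ is a past set, if $x \preceq z$ with $z \in \mathcal{M}'$, then $x \in \mathcal{M}'$ by the push-up property: from $z \ll a$ for some $a$ on $\gamma$ we get $x \ll a$. So any causal curve in $\mathcal{M}$ ending in $\mathcal{M}'$ lies entirely in $\mathcal{M}'$. This gives
\[
J^-(p,\mathcal{M}') = J^-(p) \cap \mathcal{M}' \quad\text{and}\quad J^+(p,\mathcal{M}') = J^+(p) \cap \mathcal{M}',
\]
where for the future set we use that a causal curve from $p$ to $q \in \mathcal{M}'$ ends in $\mathcal{M}'$ and therefore stays in $\mathcal{M}'$. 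Both sets are then relatively closed in $\mathcal{M}'$, because $J^\pm(p)$ are closed in $\mathcal{M}$ by causal simplicity of $\mathcal{M}$. Hence $\mathcal{M}'$ is causally simple.

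With all hypotheses verified, Theorem \ref{thm: main} applies to $\mathcal{M}'$, and conicality is understood with respect to the poset $(\mathcal{M}', \preceq_{\mathcal{M}'})$. The only point needing care is the causal-convexity argument, and in particular that push-up holds in $\mathcal{M}$, so that $I^-(\gamma)$ is a genuine past set containing the full $\mathcal{M}$-causal past of each of its points. I expect no real obstacle beyond this.
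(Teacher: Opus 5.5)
Your proposal is correct and follows the paper's proof. You reduce to causal simplicity of $\mathcal{M}'=I^-(\gamma)$ via Theorem \ref{thm: main} and Proposition \ref{prp: TIP connected}, and you obtain $J^\pm(p,\mathcal{M}')=J^\pm(p)\cap\mathcal{M}'$ from the fact that $\mathcal{M}'$ is a past set together with push-up. Your explicit check that $\mathcal{M}'$ is connected, and hence a genuine spacetime, is a small detail the paper leaves implicit.
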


\begin{proof}
It suffices to show that $I^-(\gamma)$, as a spacetime, is causally simple. For then the result follows from Theorem \ref{thm: main} and Proposition \ref{prp: TIP connected}. Set $\mathcal{M}' := I^-(\gamma)$. Clearly $\mathcal{M}'$ is causal for otherwise $\mathcal{M}$ would not be. We show $J^+(p, \mathcal{M}')$ is closed. In fact, we show $J^+(p, \mathcal{M}') = J^+(p) \cap \mathcal{M}'$. Clearly $J^+(p, \mathcal{M}') \subset J^+(p) \cap \mathcal{M}'$. Conversely, suppose $q \in J^+(p) \cap \mathcal{M}'$. Let $\alpha$ be a future-directed causal curve from $p$ to $q$. Since $\mathcal{M}'$ is a past set (i.e., $I^-(\mathcal{M}') = \mathcal{M}')$, it follows that $\alpha \subset \mathcal{M}'$ from an application of the push-up property. Thus $q \in J^+(p, \mathcal{M}')$. The proof that $J^-(p, \mathcal{M}')$ is closed is analogous.
\end{proof}

\begin{Remark}
    Note that a finite slice of Minkowski space (i.e. $\mathcal{M} = \{a < t < b\}$) is future cohesive, but is not a TIP. Hence, conicality holds in a class of spacetimes that is strictly larger than the set of all causally simple TIPs. 
\end{Remark}

\section{Summary and Outlook}\label{sec:outlook}

In~\cite{grothus2024characterizing}, the authors introduced the notion of \emph{conicality} (Definition \ref{def:conicality}) for posets and raised the question: which spacetimes are conical? Briefly, conicality means that the span of a set can always be recovered uniquely from the joint future of that set.

In this paper, we prove Theorem \ref{thm: main} demonstrating that causally simple and future cohesive (Definition \ref{def:futureconnected}) spacetimes are conical. In particular, all TIPs in a causally simple spacetime are conical (Corollary \ref{cor: tip}). 

It is worth emphasizing that the two hypotheses of Theorem \ref{thm: main} are of a rather different nature. Causal simplicity sits high on the causal ladder, just below global hyperbolicity~\cite{minguzzi2008causal}, and is a genuine causality condition: it constrains the global behavior of causal curves. Future cohesiveness, by contrast, is not a causality condition and is logically independent of the standard conditions on the causal ladder. Indeed, de Sitter space is globally hyperbolic but not future cohesive, whereas the flat Lorentzian torus is future cohesive, since every joint future is the entire spacetime, but is not causal. 
Rather, future cohesiveness is a connectedness requirement on joint futures, closer in spirit to a topological condition on causal future sets than to one of the usual conditions on the causal ladder. 
This difference is reflected in where the two hypotheses enter the proof. Causal simplicity is used throughout, but its main application is to prove the rigidity statement in Proposition \ref{prp:open2}. Future cohesiveness is used only in Lemma \ref{lem:onlyspanboundary}, which guarantees that each element of the span contributes a point on the boundary of the joint future lying on no other generator's light cone---an essential step in the proof of Theorem \ref{thm: main}.

Several directions can be explored. First, the proof of Theorem \ref{thm: main} relies on tools that are specific to smooth Lorentzian geometry---achronal null geodesics, null cut points, and the $C^0$ hypersurface structure of achronal boundaries---and it is natural to ask how much of it survives in lower regularity settings as in~\cite{Chrusciel_2012,Ling2020,Minguzzi_cone}. For $C^{1,1}$ metrics, much of causality theory persists~\cite{Kunzinger2014,Minguzzi2015}, but for metrics of lower regularity, the rigidity in Proposition \ref{prp:open2} may require modification if geodesics are not unique. More broadly, conicality is formulated solely in terms of posets, so it makes sense to pose the same question for synthetic Lorentzian structures such as Lorentzian (pre)length spaces and causal sets~\cite{kunzinger2018lorentzian,Hau_2020,Minguzzi2024,Ling2025,minguzzi2026globalhyperbolicitymeetsorder,PhysRevLett.59.521,Dowker01012006,surya2011directionscausalsetquantum}. Identifying the right substitute for Proposition \ref{prp:open2} in such a framework appears to be the essential obstruction, and its formulation would clarify which features of the smooth setting conicality genuinely requires.

A second direction concerns stability. Conicality says that $JF(L) \mapsto \text{span}(L)
$ is a well-defined map, but it does not say anything about its regularity. In particular, one would like to know if it is continuous: if two finite sets have joint futures that are close, are their spans close as well? Making this precise requires a topology on each side of the map, and the obvious choice does not work. Indeed, it is not hard to give examples in Minkowski spacetime showing that the map is not continuous with respect to Hausdorff distance: let $p$ be the origin, $q \neq p$ be on the lightcone of $p$, and $q_k \to q$ in the timelike past of $q$. Then $JF(p,q_k)$ approaches $JF(p,q) = J^+(q)$, while $\text{span}(p,q_k) = \{p,q_k\}$ and $\text{span}(p,q) =\{q\}$. The Hausdorff distance between $\{p,q_k\}$ and $\{q\}$, with respect to the canonical Euclidean metric $\delta$, is equal to $\delta(p,q) > 0$ for all sufficiently large $k$. Whether some other topology restores continuity, or whether continuity should instead be sought on a restricted class of configurations, remains to be seen.

\printbibliography

\end{document}

%% file: Tikz1.tex
\begin{tikzpicture}[scale=1.42]
    % Define colors
    \definecolor{spherecolor}{RGB}{100,150,200}
    \definecolor{bandcolor}{RGB}{255,180,100}
    
    % === FIRST SPHERE: No band, grey equator ===
    \begin{scope}[xshift=0cm, rotate=180]
        % Back meridian (dashed)
        \draw[dashed, gray!60] (0,-1) arc (270:90:0.3cm and 1cm);
        
        % Back part of equator (dashed)
        \draw[dashed, gray!60, thick] (-1,0) arc (180:360:1cm and 0.3cm);
        
        % Front meridian (solid)
        \draw[thick] (0,-1) arc (270:90:-0.3cm and 1cm);
        
        % Front part of equator (solid, gray)
        \draw[thick, gray!70] (-1,0) arc (180:0:1cm and 0.3cm);

%red dot%red dot
  \fill[red] (0,-1) circle(0.04cm);
  \fill[red] (0,1) circle(0.04cm);

        % Outer circle (limb of sphere)
        \draw[very thick] (0,0) circle (1cm);
        \node[above] at (0,-1){$p_1$};
        \node[below, align=center] at (0,1) {$p_2$};
        \node[below, align=center] at (0,1) {\vphantom{$JF(\{p_1,p_2\})$} \\ at $t=t_0$};p
    \end{scope}
    
    % === SECOND SPHERE: Thick orange equator, no band ===
    \begin{scope}[xshift=2.5cm, rotate=180]
        % Back meridian (dashed)
        \draw[dashed, gray!60] (0,-1) arc (270:90:0.3cm and 1cm);
        
        % Back part of equator (dashed, orange)
        \draw[dashed, bandcolor!70, very thick] (-1,0) arc (180:360:1cm and 0.3cm);
        
        % Front meridian (solid)
        \draw[thick] (0,-1) arc (270:90:-0.3cm and 1cm);
        
        % Front part of equator (solid, thick orange)
        \draw[very thick, bandcolor!70!black] (-1,0) arc (180:0:1cm and 0.3cm);
        
        % Outer circle (limb of sphere)
        \draw[very thick] (0,0) circle (1cm);
        \node[below, align=center] at (0,1) {$JF(\{p_1,p_2\})$ at \\ $t=t_0+\frac{\pi}{2}$};
    \end{scope}
    
    % === THIRD SPHERE: Like prototype ===
    \begin{scope}[xshift=5cm, rotate=180]
        % Back meridian (dashed)
        \draw[dashed, gray!60] (0,-1) arc (270:90:0.3cm and 1cm);
        
        % Back part of equator (dashed)
        \draw[dashed, gray!60, thick] (-1,0) arc (180:360:1cm and 0.3cm);
        
        % Back part of band
        \fill[bandcolor!40, opacity=0.3] 
            (-1,0.15) arc (180:360:1cm and 0.3cm) -- 
            (1,-0.15) arc (360:180:1cm and 0.3cm) -- cycle;
        
        % Back edges of band (dashed)
        \draw[dashed, bandcolor!70, thick] (-1,0.15) arc (180:360:1cm and 0.3cm);
        \draw[dashed, bandcolor!70, thick] (-1,-0.15) arc (180:360:1cm and 0.3cm);
        
        % Front meridian (solid)
        \draw[thick] (0,-1) arc (270:90:-0.3cm and 1cm);
        
        % Front part of equator (solid, gray)
        \draw[thick, gray!70] (-1,0) arc (180:0:1cm and 0.3cm);
        
        % Front part of band
        \fill[bandcolor!70, opacity=0.5] 
            (-1,0.15) arc (180:0:1cm and 0.3cm) -- 
            (1,-0.15) arc (0:180:1cm and 0.3cm) -- cycle;
        
        % Front edges of band (solid)
        \draw[bandcolor!70!black, thick] (-1,0.15) arc (180:0:1cm and 0.3cm);
        \draw[bandcolor!70!black, thick] (-1,-0.15) arc (180:0:1cm and 0.3cm);
        
        % Outer circle (limb of sphere)
        \draw[very thick] (0,0) circle (1cm);

        \node[below, align=center] at (0,1) {$JF(\{p_1,p_2\})$ at \\ $t>t_0+\frac{\pi}{2}$};
    \end{scope}
    
    % === FOURTH SPHERE: Like prototype but thicker band ===
    \begin{scope}[xshift=7.5cm, rotate=180]
        % Back meridian (dashed)
        \draw[dashed, gray!60] (0,-1) arc (270:90:0.3cm and 1cm);
        
        % Back part of equator (dashed)
        \draw[dashed, gray!60, thick] (-1,0) arc (180:360:1cm and 0.3cm);
        
        % Back part of band (thicker - using 0.22 to stay within sphere)
        \fill[bandcolor!40, opacity=0.3] 
            (-0.98,0.22) arc (180:360:0.98cm and 0.29cm) -- 
            (0.98,-0.22) arc (360:180:0.98cm and 0.29cm) -- cycle;
        
        % Back edges of band (dashed, thicker)
        \draw[dashed, bandcolor!70, thick] (-0.98,0.22) arc (180:360:0.98cm and 0.29cm);
        \draw[dashed, bandcolor!70, thick] (-0.98,-0.22) arc (180:360:0.98cm and 0.29cm);
        
        % Front meridian (solid)
        \draw[thick] (0,-1) arc (270:90:-0.3cm and 1cm);
        
        % Front part of equator (solid, gray)
        \draw[thick, gray!70] (-1,0) arc (180:0:1cm and 0.3cm);
        
        % Front part of band (thicker)
        \fill[bandcolor!70, opacity=0.5] 
            (-0.98,0.22) arc (180:0:0.98cm and 0.29cm) -- 
            (0.98,-0.22) arc (0:180:0.98cm and 0.29cm) -- cycle;
        
        % Front edges of band (solid, thicker)
        \draw[bandcolor!70!black, thick] (-0.98,0.22) arc (180:0:0.98cm and 0.29cm);
        \draw[bandcolor!70!black, thick] (0.98,-0.22) arc (0:180:0.98cm and 0.29cm);
        
        % Outer circle (limb of sphere)
        \draw[very thick] (0,0) circle (1cm);
\node[below, align=center] at (0,1) {$JF(\{p_1,p_2\})$ at \\ $t=t_{\text{max}}$};
        
    \end{scope}
    
\end{tikzpicture}

%% file: Tikz2.tex
% Requires: amsmath for \text command
\begin{tikzpicture}[scale=1.42]
    % Define colors
    \definecolor{spherecolor}{RGB}{100,150,200}
    \definecolor{bandcolor}{RGB}{255,180,100}
    
    % === FIRST SPHERE: No band, grey equator ===
    \begin{scope}[xshift=0cm, rotate=180]
        % Back meridian (dashed)
        \draw[dashed, gray!60] (0,-1) arc (270:90:0.3cm and 1cm);
        
        % Back part of equator (dashed)
        \draw[dashed, gray!60, thick] (-1,0) arc (180:360:1cm and 0.3cm);
        
        % Front meridian (solid)
        \draw[thick, gray!60] (0,-1) arc (270:90:-0.3cm and 1cm);
        
        % Front part of equator (solid, gray)
        \draw[thick, gray!70] (-1,0) arc (180:0:1cm and 0.3cm);
     %red dot%red dot
  \fill[red] (0,-1) circle(0.04cm);
  \fill[red] (0,1) circle(0.04cm);
  \fill[red] (-1,0) circle(0.04cm);
  \fill[red] (1,0) circle(0.04cm);
        
        % Outer circle (limb of sphere)
        \draw[very thick] (0,0) circle (1cm);
        \node[right] at (-1,0){$p_3$};
         \node[left] at (1,0){$p_4$};
        \node[above] at (0,-1){$p_1$};
        \node[below, align=center] at (0,1) {$p_2$};
        \node[below, align=center] at (0,1.3){\vphantom{$JF(\{p_1,p_2\})$}\\ \vphantom{$JF(\{p_1,p_2\})$}\\ at $t=t_0$};
    \end{scope}
    
    % === SECOND SPHERE: Thick orange equator + meridian band ===
    \begin{scope}[xshift=2.5cm, rotate=180]
        % Back part of equator (dashed, orange)
        \draw[dashed, bandcolor!70, thick] (-1,0) arc (180:360:1cm and 0.3cm);
        
        % Back meridian (dashed)
        \draw[dashed, bandcolor!70, thick] (0,-1) arc (270:90:0.3cm and 1cm);
        
        % Front meridian (solid)
        \draw[bandcolor!70!black,very thick] (0,-1) arc (270:90:-0.3cm and 1cm);
        
        % Front part of equator (solid, thick orange)
        \draw[very thick, bandcolor!70!black] (-1,0) arc (180:0:1cm and 0.3cm);
        
        % Outer circle (limb of sphere)
        \draw[very thick] (0,0) circle (1cm);
        \node[below, align=center] at (0,1.3) {$JF(\{p_1,p_2\})$ and \\ $JF(\{p_3,p_4\})$ at \\ $t=t_0+\frac{\pi}{2}$};
    \end{scope}
    
    % === THIRD SPHERE: Equator band + meridian band ===
    \begin{scope}[xshift=5cm, rotate=180]
        % Back part of equator (dashed)
        \draw[dashed, gray!60, thick] (-1,0) arc (180:360:1cm and 0.3cm);
        % Back meridian (dashed)
        \draw[dashed, gray!60, thick] (0,-1) arc (270:90:0.3cm and 1cm);
        
        % Back part of equator band
        \fill[bandcolor!40, opacity=0.3] 
            (-1,0.15) arc (180:360:1cm and 0.3cm) -- 
            (1,-0.15) arc (360:180:1cm and 0.3cm) -- cycle;
        
        % Back edges of equator band (dashed)
        \draw[dashed, bandcolor!70, thick] (-1,0.15) arc (180:360:1cm and 0.3cm);
        \draw[dashed, bandcolor!70, thick] (-1,-0.15) arc (180:360:1cm and 0.3cm);
        
        % Back meridian band
        \fill[bandcolor!40, opacity=0.3] 
            (-0.15,-1) arc (270:90:0.3cm and 1cm) -- 
            (0.15,1) arc (90:270:0.3cm and 1cm) -- cycle;
        
        % Back edges of meridian band (dashed)
        \draw[dashed, bandcolor!70, thick] (-0.15,-1) arc (270:90:0.3cm and 1cm);
        \draw[dashed, bandcolor!70, thick] (0.15,-1) arc (270:90:0.3cm and 1cm);
% Front meridian (solid)
        \draw[gray!70, thick] (0,-1) arc (270:90:-0.3cm and 1cm);
        
        % Front meridian band
        \fill[bandcolor!70, opacity=0.5] 
            (-0.15,-1) arc (270:90:-0.3cm and 1cm) -- 
            (0.15,1) arc (90:270:-0.3cm and 1cm) -- cycle;

        % Front edges of meridian band (solid)
        \draw[bandcolor!70!black, thick] (-0.15,-1) arc (270:90:-0.3cm and 1cm);
        \draw[bandcolor!70!black, thick] (0.15,-1) arc (270:90:-0.3cm and 1cm);
        
        % Front part of equator (solid, gray)
        \draw[thick, gray!70] (-1,0) arc (180:0:1cm and 0.3cm);
        
        % Front part of equator band
        \fill[bandcolor!70, opacity=0.5] 
            (-1,0.15) arc (180:0:1cm and 0.3cm) -- 
            (1,-0.15) arc (0:180:1cm and 0.3cm) -- cycle;
        
        % Front edges of equator band (solid)
        \draw[bandcolor!70!black, thick] (-1,0.15) arc (180:0:1cm and 0.3cm);
        \draw[bandcolor!70!black, thick] (-1,-0.15) arc (180:0:1cm and 0.3cm);
        
        % Outer circle (limb of sphere)
        \draw[very thick] (0,0) circle (1cm);
        \node[below, align=center] at (0,1.3) {$JF(\{p_1,p_2\})$ and \\ $JF(\{p_3,p_4\})$ at \\ $t>t_0+\frac{\pi}{2}$};
    \end{scope}
    
    % === FOURTH SPHERE: Thicker bands ===
    \begin{scope}[xshift=7.5cm, rotate=180]
        % Back part of equator (dashed)
        \draw[dashed, gray!60, thick] (-1,0) arc (180:360:1cm and 0.3cm);
         % Back meridian (dashed)
        \draw[dashed, gray!60, thick] (0,-1) arc (270:90:0.3cm and 1cm);
        
        % Back part of equator band (thicker)
        \fill[bandcolor!40, opacity=0.3] 
            (-0.98,0.22) arc (180:360:0.98cm and 0.29cm) -- 
            (0.98,-0.22) arc (360:180:0.98cm and 0.29cm) -- cycle;
        
        % Back edges of equator band (dashed, thicker)
        \draw[dashed, bandcolor!70, thick] (-0.98,0.22) arc (180:360:0.98cm and 0.29cm);
        \draw[dashed, bandcolor!70, thick] (-0.98,-0.22) arc (180:360:0.98cm and 0.29cm);
        
        % Back meridian band (thicker)
        \fill[bandcolor!40, opacity=0.3] 
            (-0.22,-0.98) arc (270:90:0.29cm and 0.98cm) -- 
            (0.22,0.98) arc (90:270:0.29cm and 0.98cm) -- cycle;
        
        % Back edges of meridian band (dashed, thicker)
        \draw[dashed, bandcolor!70, thick] (-0.22,-0.98) arc (270:90:0.29cm and 0.98cm);
        \draw[dashed, bandcolor!70, thick] (0.22,-0.98) arc (270:90:0.29cm and 0.98cm);

% Front meridian (solid)
        \draw[gray!70, thick] (0,-1) arc (270:90:-0.3cm and 1cm);
        
        % Front meridian band (thicker)
        \fill[bandcolor!70, opacity=0.5] 
            (-0.22,-0.98) arc (270:90:-0.29cm and 0.98cm) -- 
            (0.22,0.98) arc (90:270:-0.29cm and 0.98cm) -- cycle;
        
        % Front edges of meridian band (solid, thicker)
        \draw[bandcolor!70!black, thick] (-0.22,-0.98) arc (270:90:-0.29cm and 0.98cm);
        \draw[bandcolor!70!black, thick] (0.22,-0.98) arc (270:90:-0.29cm and 0.98cm);
        
        % Front part of equator (solid, gray)
        \draw[thick, gray!70] (-1,0) arc (180:0:1cm and 0.3cm);
        
        % Front part of equator band (thicker)
        \fill[bandcolor!70, opacity=0.5] 
            (-0.98,0.22) arc (180:0:0.98cm and 0.29cm) -- 
            (0.98,-0.22) arc (0:180:0.98cm and 0.29cm) -- cycle;
        
        % Front edges of equator band (solid, thicker)
        \draw[bandcolor!70!black, thick] (-0.98,0.22) arc (180:0:0.98cm and 0.29cm);
        \draw[bandcolor!70!black, thick] (0.98,-0.22) arc (0:180:0.98cm and 0.29cm);
        
        % Outer circle (limb of sphere)
        \draw[very thick] (0,0) circle (1cm);
        \node[below, align=center] at (0,1.3) {$JF(\{p_1,p_2\})$ and \\ $JF(\{p_3,p_4\})$ at \\ $t=t_{\mathrm{max}}$};
    \end{scope}
    
\end{tikzpicture}

%% file: Tikz3.tex
% Requires: amsmath for \text command
\hspace{-0.4cm}\begin{tikzpicture}[scale=1.42]
    % Define colors
    \definecolor{spherecolor}{RGB}{100,150,200}
    \definecolor{bandcolor}{RGB}{255,180,100}
    
    % === FIRST SPHERE: No band, grey equator ===
    \begin{scope}[xshift=0cm, rotate=180]
        % Back meridian (dashed)
        \draw[dashed, gray!60] (0,-1) arc (270:90:0.3cm and 1cm);
        
        % Back part of equator (dashed)
        \draw[dashed, gray!60, thick] (-1,0) arc (180:360:1cm and 0.3cm);
        
        % Front meridian (solid)
        \draw[thick, gray!60] (0,-1) arc (270:90:-0.3cm and 1cm);
        
        % Front part of equator (solid, gray)
        \draw[thick, gray!70] (-1,0) arc (180:0:1cm and 0.3cm);

%red dot
  \draw[very thick,red] (0.29,0.29) circle(0.015cm);

        % Outer circle (limb of sphere)
        \draw[very thick] (0,0) circle (1cm);
        \node[below, align=center] at (0,1.1) {$\partial J^+(\tilde p)=\tilde p$ at \\ $t=\frac{\pi}{2}-\arccos{\sqrt{\cos{(2t_0 )}}}$};
    \end{scope}
    
    % === SECOND SPHERE: Thick orange equator + meridian band ===
    \begin{scope}[xshift=2.5cm, rotate=180]
        % Back part of equator (dashed, orange)
        \draw[dashed, bandcolor!70, thick] (-1,0) arc (180:360:1cm and 0.3cm);
        
        % Back meridian (dashed)
        \draw[dashed, bandcolor!70, thick] (0,-1) arc (270:90:0.3cm and 1cm);
        
        % Front meridian (solid)
        \draw[bandcolor!70!black,very thick] (0,-1) arc (270:90:-0.3cm and 1cm);
        
        % Front part of equator (solid, thick orange)
        \draw[very thick, bandcolor!70!black] (-1,0) arc (180:0:1cm and 0.3cm);

%red circle
  \draw[very thick,red] (0.37,0.29) arc (0:360:0.08cm and 0.1cm);
        
        % Outer circle (limb of sphere)
        \draw[very thick] (0,0) circle (1cm);
        \node[below, align=center] at (0,1.1) {$\partial J^+(\tilde p)$  at \\ $t=t_0+\frac{\pi}{2}$};
    \end{scope}
    
    % === THIRD SPHERE: Equator band + meridian band ===
    \begin{scope}[xshift=5cm, rotate=180]
        % Back part of equator (dashed)
        \draw[dashed, gray!60, thick] (-1,0) arc (180:360:1cm and 0.3cm);
        % Back meridian (dashed)
        \draw[dashed, gray!60, thick] (0,-1) arc (270:90:0.3cm and 1cm);
        
        % Back part of equator band
        \fill[bandcolor!40, opacity=0.3] 
            (-1,0.15) arc (180:360:1cm and 0.3cm) -- 
            (1,-0.15) arc (360:180:1cm and 0.3cm) -- cycle;
        
        % Back edges of equator band (dashed)
        \draw[dashed, bandcolor!70, thick] (-1,0.15) arc (180:360:1cm and 0.3cm);
        \draw[dashed, bandcolor!70, thick] (-1,-0.15) arc (180:360:1cm and 0.3cm);
        
        % Back meridian band
        \fill[bandcolor!40, opacity=0.3] 
            (-0.15,-1) arc (270:90:0.3cm and 1cm) -- 
            (0.15,1) arc (90:270:0.3cm and 1cm) -- cycle;
        
        % Back edges of meridian band (dashed)
        \draw[dashed, bandcolor!70, thick] (-0.15,-1) arc (270:90:0.3cm and 1cm);
        \draw[dashed, bandcolor!70, thick] (0.15,-1) arc (270:90:0.3cm and 1cm);
% Front meridian (solid)
        \draw[gray!70, thick] (0,-1) arc (270:90:-0.3cm and 1cm);
        
        % Front meridian band
        \fill[bandcolor!70, opacity=0.5] 
            (-0.15,-1) arc (270:90:-0.3cm and 1cm) -- 
            (0.15,1) arc (90:270:-0.3cm and 1cm) -- cycle;

        % Front edges of meridian band (solid)
        \draw[bandcolor!70!black, thick] (-0.15,-1) arc (270:90:-0.3cm and 1cm);
        \draw[bandcolor!70!black, thick] (0.15,-1) arc (270:90:-0.3cm and 1cm);
        
        % Front part of equator (solid, gray)
        \draw[thick, gray!70] (-1,0) arc (180:0:1cm and 0.3cm);
        
        % Front part of equator band
        \fill[bandcolor!70, opacity=0.5] 
            (-1,0.15) arc (180:0:1cm and 0.3cm) -- 
            (1,-0.15) arc (0:180:1cm and 0.3cm) -- cycle;
        
        % Front edges of equator band (solid)
        \draw[bandcolor!70!black, thick] (-1,0.15) arc (180:0:1cm and 0.3cm);
        \draw[bandcolor!70!black, thick] (-1,-0.15) arc (180:0:1cm and 0.3cm);

        %red circle
  \draw[very thick,red,rotate=15] (0.58,0.19) arc (0:360:0.24cm and 0.3cm);
        
        % Outer circle (limb of sphere)
        \draw[very thick] (0,0) circle (1cm);

        \node[below, align=center] at (0,1.1) {$\partial J^+(\tilde p)$ at \\ $t>t_0+\frac{\pi}{2}$};
    \end{scope}
    
    % === FOURTH SPHERE: Thicker bands ===
    \begin{scope}[xshift=7.5cm, rotate=180]
        % Back part of equator (dashed)
        \draw[dashed, gray!60, thick] (-1,0) arc (180:360:1cm and 0.3cm);
         % Back meridian (dashed)
        \draw[dashed, gray!60, thick] (0,-1) arc (270:90:0.3cm and 1cm);
        
        % Back part of equator band (thicker)
        \fill[bandcolor!40, opacity=0.3] 
            (-0.98,0.22) arc (180:360:0.98cm and 0.29cm) -- 
            (0.98,-0.22) arc (360:180:0.98cm and 0.29cm) -- cycle;
        
        % Back edges of equator band (dashed, thicker)
        \draw[dashed, bandcolor!70, thick] (-0.98,0.22) arc (180:360:0.98cm and 0.29cm);
        \draw[dashed, bandcolor!70, thick] (-0.98,-0.22) arc (180:360:0.98cm and 0.29cm);
        
        % Back meridian band (thicker)
        \fill[bandcolor!40, opacity=0.3] 
            (-0.22,-0.98) arc (270:90:0.29cm and 0.98cm) -- 
            (0.22,0.98) arc (90:270:0.29cm and 0.98cm) -- cycle;
        
        % Back edges of meridian band (dashed, thicker)
        \draw[dashed, bandcolor!70, thick] (-0.22,-0.98) arc (270:90:0.29cm and 0.98cm);
        \draw[dashed, bandcolor!70, thick] (0.22,-0.98) arc (270:90:0.29cm and 0.98cm);

% Front meridian (solid)
        \draw[gray!70, thick] (0,-1) arc (270:90:-0.3cm and 1cm);
        
        % Front meridian band (thicker)
        \fill[bandcolor!70, opacity=0.5] 
            (-0.22,-0.98) arc (270:90:-0.29cm and 0.98cm) -- 
            (0.22,0.98) arc (90:270:-0.29cm and 0.98cm) -- cycle;
        
        % Front edges of meridian band (solid, thicker)
        \draw[bandcolor!70!black, thick] (-0.22,-0.98) arc (270:90:-0.29cm and 0.98cm);
        \draw[bandcolor!70!black, thick] (0.22,-0.98) arc (270:90:-0.29cm and 0.98cm);
        
        % Front part of equator (solid, gray)
        \draw[thick, gray!70] (-1,0) arc (180:0:1cm and 0.3cm);
        
        % Front part of equator band (thicker)
        \fill[bandcolor!70, opacity=0.5] 
            (-0.98,0.22) arc (180:0:0.98cm and 0.29cm) -- 
            (0.98,-0.22) arc (0:180:0.98cm and 0.29cm) -- cycle;
        
        % Front edges of equator band (solid, thicker)
        \draw[bandcolor!70!black, thick] (-0.98,0.22) arc (180:0:0.98cm and 0.29cm);
        \draw[bandcolor!70!black, thick] (0.98,-0.22) arc (0:180:0.98cm and 0.29cm);

        \draw[very thick,red,rotate=15] (0.63,0.21) arc (0:360:0.288cm and 0.36cm);
        
        % Outer circle (limb of sphere)
        \draw[very thick] (0,0) circle (1cm);
        \node[below, align=center] at (0,1.1) {$\partial J^+(\tilde p)$ at\\ $t=t_{\mathrm{max}}$};
    \end{scope}
    
\end{tikzpicture}

%% file: Tikz4.tex
\begin{center}
\begin{tikzpicture}[scale=1.25]
    % Define colors

    % === FIRST GRAPH: 
    \begin{scope}[xshift=0cm, x={(1cm,0cm)}, y={(0.2cm,0.25cm)}, z={(0cm,1cm)}]
    % === STYLES ===
    \tikzset{
        axis/.style={thick, gray, -stealth},
        grid line/.style={thin, gray!40},
        cone edge/.style={thick},
        cone edge hidden/.style={thick, dashed, gray!60}
    }
    
    % === PARAMETERS ===
    % Define circle parameters
    \def\angleA{110}
    \def\angleB{220} % Changed to demonstrate a larger arc
    \def\angleC{130}
    \def\angleD{200}
    \def\axislen{3.5}
    \def\coneheight{2.5}
    \def\coneradius{1.1}
    \def\ellipseminor{1.1}  % Minor axis of ellipse (y-direction projection)
    
    % === COORDINATE SYSTEM AXES ===
    % x-axis
    \draw[axis] (-1,1,0) -- (-1+0.5*\axislen,1,0) node[right] {};
    % y-axis
    \draw[axis] (-1,1,0) -- (-1,1-0.8*\axislen,0) node[above right] {};
    % z-axis
    \draw[axis] (-1,1,0) -- (-1,1,0.8*\axislen) node[above] {};
    
    % === CONE COORDINATES ===
    \coordinate (apex) at (0,0,0);
    \coordinate (base) at (0,0,\coneheight);
    
    % Edge points on the base ellipse
    \coordinate (baseRight) at (\coneradius-0.02,0.1,\coneheight);
    \coordinate (baseFront) at (0,-\ellipseminor,\coneheight);
    \coordinate (baseLeft) at (-\coneradius+0.07,-0.2,\coneheight);
    \coordinate (baseBack) at (0,\ellipseminor,\coneheight);
    \coordinate (basemidpoint) at (0,0.81*\ellipseminor,0.81*\coneheight);
    
    % Points on circle using parametric coordinates (for marking start/end points)
    \coordinate (PointA) at ({sin(\angleA)*\coneradius},{cos(\angleA)*\ellipseminor},\coneheight);
    \coordinate (PointB) at ({sin(\angleB)*\coneradius},{cos(\angleB)*\ellipseminor},\coneheight);
     \coordinate (PointC) at ({sin(\angleC)*\coneradius},{cos(\angleC)*\ellipseminor},\coneheight);
    \coordinate (PointD) at ({sin(\angleD)*\coneradius},{cos(\angleD)*\ellipseminor},\coneheight);
     \coordinate (PointA2) at ({0.6*sin(\angleA)*\coneradius},{0.6*cos(\angleA)*\ellipseminor},{0.6*\coneheight});
    \coordinate (PointB2) at ({0.6*sin(\angleB)*\coneradius},{0.6*cos(\angleB)*\ellipseminor},{0.6*\coneheight});
     \coordinate (PointC2) at ({0.6*sin(\angleC)*\coneradius},{0.6*cos(\angleC)*\ellipseminor},{0.6*\coneheight});
    \coordinate (PointD2) at ({0.6*sin(\angleD)*\coneradius},{0.6*cos(\angleD)*\ellipseminor},{0.6*\coneheight});

  % Define the start point of the arc within this local coordinate system
        \coordinate (startArcPoint) at ({sin(\angleA)*\coneradius},{cos(\angleA)*\ellipseminor},\coneheight);
        
 % Draw the elliptical arc
        \draw[thick] (startArcPoint) 
            arc[start angle={90-\angleA}, end angle={90-\angleB}, x radius=\coneradius, y radius=\ellipseminor];

  % Define the start point of the arc within this local coordinate system
        \coordinate (startArcPoint) at ({0.6*sin(\angleA)*\coneradius},{0.6*cos(\angleA)*\ellipseminor},{0.6*\coneheight});
        
 % Draw the elliptical arc
        \draw[thick] (startArcPoint) 
            arc[start angle={90-\angleA}, end angle={90-\angleB}, x radius={0.6*\coneradius}, y radius={0.6*\ellipseminor}];
    
    % === VISIBLE CONE EDGES (solid) ===
    % Right edge
    \draw[cone edge, decorate, decoration={snake, amplitude=0.5mm, segment length=6.3mm}] (PointA2) -- (PointA);
    % Front edge
    \draw[cone edge,decorate, decoration={snake, amplitude=0.5mm, segment length=6.3mm}] (PointB2) -- (PointB);
    
    % Mark points
    %\fill[red] (PointC2) circle (3pt);
   % \fill[red] (PointD2) circle (3pt);

\begin{scope}[shift={(0,0,{0.92*\coneheight})}]
        % Define the start point of the arc within this local coordinate system
        \coordinate (label) at (0,0,0);
        
\node[above] (label){$\Omega\subset \partial J^+(p)$};
    \end{scope}

    \end{scope}
    
    % === SECOND GRAPH
    \begin{scope}[xshift=3cm, x={(1cm,0cm)}, y={(0.2cm,0.25cm)}, z={(0cm,1cm)}]
    % === STYLES ===
    \tikzset{
        axis/.style={thick, gray, -stealth},
        grid line/.style={thin, gray!40},
        cone edge/.style={thick},
        cone edge hidden/.style={thick, dashed, gray!60}
    }
    
    % === PARAMETERS ===
    % Define circle parameters
    \def\angleA{110}
    \def\angleB{220} % Changed to demonstrate a larger arc
    \def\angleC{130}
    \def\angleD{200}
    \def\axislen{3.5}
    \def\coneheight{2.5}
    \def\coneradius{1.1}
    \def\ellipseminor{1.1}  % Minor axis of ellipse (y-direction projection)
    
    % === COORDINATE SYSTEM AXES ===
    % x-axis
    \draw[axis] (-1,1,0) -- (-1+0.5*\axislen,1,0) node[right] {};
    % y-axis
    \draw[axis] (-1,1,0) -- (-1,1-0.8*\axislen,0) node[above right] {};
    % z-axis
    \draw[axis] (-1,1,0) -- (-1,1,0.8*\axislen) node[above] {};
    
    % === CONE COORDINATES ===
    \coordinate (apex) at (0,0,0);
    \coordinate (base) at (0,0,\coneheight);
    
    % Edge points on the base ellipse
    \coordinate (baseRight) at (\coneradius-0.02,0.1,\coneheight);
    \coordinate (baseFront) at (0,-\ellipseminor,\coneheight);
    \coordinate (baseLeft) at (-\coneradius+0.07,-0.2,\coneheight);
    \coordinate (baseBack) at (0,\ellipseminor,\coneheight);
    \coordinate (basemidpoint) at (0,0.81*\ellipseminor,0.81*\coneheight);
    
    % Points on circle using parametric coordinates (for marking start/end points)
    \coordinate (PointA) at ({sin(\angleA)*\coneradius},{cos(\angleA)*\ellipseminor},\coneheight);
    \coordinate (PointB) at ({sin(\angleB)*\coneradius},{cos(\angleB)*\ellipseminor},\coneheight);
     \coordinate (PointC) at ({sin(\angleC)*\coneradius},{cos(\angleC)*\ellipseminor},\coneheight);
    \coordinate (PointD) at ({sin(\angleD)*\coneradius},{cos(\angleD)*\ellipseminor},\coneheight);
     \coordinate (PointA2) at ({0.6*sin(\angleA)*\coneradius},{0.6*cos(\angleA)*\ellipseminor},{0.6*\coneheight});
    \coordinate (PointB2) at ({0.6*sin(\angleB)*\coneradius},{0.6*cos(\angleB)*\ellipseminor},{0.6*\coneheight});
     \coordinate (PointC2) at ({0.6*sin(\angleC)*\coneradius},{0.6*cos(\angleC)*\ellipseminor},{0.6*\coneheight});
    \coordinate (PointD2) at ({0.6*sin(\angleD)*\coneradius},{0.6*cos(\angleD)*\ellipseminor},{0.6*\coneheight});

  % Define the start point of the arc within this local coordinate system
        \coordinate (startArcPoint) at ({sin(\angleA)*\coneradius},{cos(\angleA)*\ellipseminor},\coneheight);
        
 % Draw the elliptical arc
        \draw[thick] (startArcPoint) 
            arc[start angle={90-\angleA}, end angle={90-\angleB}, x radius=\coneradius, y radius=\ellipseminor];

  % Define the start point of the arc within this local coordinate system
        \coordinate (startArcPoint) at ({0.6*sin(\angleA)*\coneradius},{0.6*cos(\angleA)*\ellipseminor},{0.6*\coneheight});
        
 % Draw the elliptical arc
        \draw[thick] (startArcPoint) 
            arc[start angle={90-\angleA}, end angle={90-\angleB}, x radius={0.6*\coneradius}, y radius={0.6*\ellipseminor}];
    
    % === VISIBLE CONE EDGES (solid) ===
    % Right edge
    \draw[cone edge, decorate, decoration={snake, amplitude=0.5mm, segment length=6.3mm}] (PointA2) -- (PointA);
    % Front edge
    \draw[cone edge,decorate, decoration={snake, amplitude=0.5mm, segment length=6.3mm}] (PointB2) -- (PointB);

    \draw[blue, thick](PointC2)--(PointC);
    \draw[blue, thick](PointD2)--(PointD);
    % Mark points
   \begin{scope}[shift={(0,0,{0.92*\coneheight})}]
        % Define the start point of the arc within this local coordinate system
        \coordinate (label) at (0,0,0);
        
\node[above] (label){$\Omega\subset \partial J^+(p)$};
    \end{scope}

        \begin{scope}[shift={(0,0,{0.53*\coneheight})}]
        % Define the start point of the arc within this local coordinate system
        \coordinate (label) at (0,0,0);
        
\node[below, align=center, scale =.75] (label){find two \\null geodesic \\segments};
    \end{scope}
      
    \end{scope}
    
    % === THIRD GRAPH
    \begin{scope}[xshift=6cm, x={(1cm,0cm)}, y={(0.2cm,0.25cm)}, z={(0cm,1cm)}]
    % === STYLES ===
    \tikzset{
        axis/.style={thick, gray, -stealth},
        grid line/.style={thin, gray!40},
        cone edge/.style={thick},
        cone edge hidden/.style={thick, dashed, gray!60}
    }
    
    % === PARAMETERS ===
    % Define circle parameters
    \def\angleA{110}
    \def\angleB{220} % Changed to demonstrate a larger arc
    \def\angleC{130}
    \def\angleD{200}
    \def\axislen{3.5}
    \def\coneheight{2.5}
    \def\coneradius{1.1}
    \def\ellipseminor{1.1}  % Minor axis of ellipse (y-direction projection)
    
    % === COORDINATE SYSTEM AXES ===
    % x-axis
    \draw[axis] (-1,1,0) -- (-1+0.5*\axislen,1,0) node[right] {};
    % y-axis
    \draw[axis] (-1,1,0) -- (-1,1-0.8*\axislen,0) node[above right] {};
    % z-axis
    \draw[axis] (-1,1,0) -- (-1,1,0.8*\axislen) node[above] {};
    
    % === CONE COORDINATES ===
    \coordinate (apex) at (0,0,0);
    \coordinate (base) at (0,0,\coneheight);
    
    % Edge points on the base ellipse
    \coordinate (baseRight) at (\coneradius-0.02,0.1,\coneheight);
    \coordinate (baseFront) at (0,-\ellipseminor,\coneheight);
    \coordinate (baseLeft) at (-\coneradius+0.07,-0.2,\coneheight);
    \coordinate (baseBack) at (0,\ellipseminor,\coneheight);
    \coordinate (basemidpoint) at (0,0.81*\ellipseminor,0.81*\coneheight);
    
    % Points on circle using parametric coordinates (for marking start/end points)
    \coordinate (PointA) at ({sin(\angleA)*\coneradius},{cos(\angleA)*\ellipseminor},\coneheight);
    \coordinate (PointB) at ({sin(\angleB)*\coneradius},{cos(\angleB)*\ellipseminor},\coneheight);
     \coordinate (PointC) at ({sin(\angleC)*\coneradius},{cos(\angleC)*\ellipseminor},\coneheight);
    \coordinate (PointD) at ({sin(\angleD)*\coneradius},{cos(\angleD)*\ellipseminor},\coneheight);
     \coordinate (PointA2) at ({0.6*sin(\angleA)*\coneradius},{0.6*cos(\angleA)*\ellipseminor},{0.6*\coneheight});
    \coordinate (PointB2) at ({0.6*sin(\angleB)*\coneradius},{0.6*cos(\angleB)*\ellipseminor},{0.6*\coneheight});
     \coordinate (PointC2) at ({0.6*sin(\angleC)*\coneradius},{0.6*cos(\angleC)*\ellipseminor},{0.6*\coneheight});
    \coordinate (PointD2) at ({0.6*sin(\angleD)*\coneradius},{0.6*cos(\angleD)*\ellipseminor},{0.6*\coneheight});
       \coordinate (PointC3) at ({-0.1*sin(\angleC)*\coneradius},{-0.1*cos(\angleC)*\ellipseminor},{-0.1*\coneheight});
    \coordinate (PointD3) at ({-0.1*sin(\angleD)*\coneradius},{-0.1*cos(\angleD)*\ellipseminor},{-0.1*\coneheight});
  
  % Define the start point of the arc within this local coordinate system
        \coordinate (startArcPoint) at ({sin(\angleA)*\coneradius},{cos(\angleA)*\ellipseminor},\coneheight);
        
 % Draw the elliptical arc
        \draw[thick] (startArcPoint) 
            arc[start angle={90-\angleA}, end angle={90-\angleB}, x radius=\coneradius, y radius=\ellipseminor];

  % Define the start point of the arc within this local coordinate system
        \coordinate (startArcPoint) at ({0.6*sin(\angleA)*\coneradius},{0.6*cos(\angleA)*\ellipseminor},{0.6*\coneheight});
        
 % Draw the elliptical arc
        \draw[thick] (startArcPoint) 
            arc[start angle={90-\angleA}, end angle={90-\angleB}, x radius={0.6*\coneradius}, y radius={0.6*\ellipseminor}];
    
    % === VISIBLE CONE EDGES (solid) ===
    % Right edge
    \draw[cone edge, decorate, decoration={snake, amplitude=0.5mm, segment length=6.3mm}] (PointA2) -- (PointA);
    % Front edge
    \draw[cone edge,decorate, decoration={snake, amplitude=0.5mm, segment length=6.3mm}] (PointB2) -- (PointB);

 \draw[blue, thick](PointC3)--(PointC);
    \draw[blue, thick](PointD3)--(PointD);

    % Mark points
    \fill[red] (apex) circle (1.5pt);
   
      \node[right](apex) {$p$};

      \begin{scope}[shift={(0,0,{0.92*\coneheight})}]
        % Define the start point of the arc within this local coordinate system
        \coordinate (label) at (0,0,0);
        
\node[above] (label){$\Omega\subset \partial J^+(p)$};
    \end{scope}

    \begin{scope}[shift={(0,0,{0.3*\coneheight})}]
        % Define the start point of the arc within this local coordinate system
        \coordinate (label) at (0,0,0);
        
\node[right, scale = .75] (label){\hphantom{dd}past extension };
    \end{scope}
  
    \end{scope}

\end{tikzpicture}
\end{center}